\pdfoutput=1
\documentclass[twoside,leqno,twocolumn]{article}

\usepackage[letterpaper]{geometry}

\usepackage{siamproceedings}

\usepackage[T1]{fontenc}
\usepackage{amsfonts}
\usepackage{graphicx}
\usepackage{epstopdf}
\usepackage{enumitem}
\usepackage{algorithmic}
\usepackage{todonotes}

\ifpdf
  \DeclareGraphicsExtensions{.eps,.pdf,.png,.jpg}
\else
  \DeclareGraphicsExtensions{.eps}
\fi

\newsiamremark{remark}{Remark}
\crefname{remark}{Remark}{Remarks}
\newsiamremark{hypothesis}{Hypothesis}
\crefname{hypothesis}{Hypothesis}{Hypotheses}
\newsiamremark{observation}{Observation}
\crefname{observation}{Observation}{Observations}
\crefname{appendix}{Appendix}{Appendices}
\Crefname{appendix}{Appendix}{Appendices}
\newsiamthm{claim}{Claim}

\AtBeginEnvironment{definition}{\itshape}
\AtBeginEnvironment{observation}{\itshape}

\usepackage{amsopn}

\usepackage{etoolbox}
\usepackage{wrapfig}
\usepackage{nccmath}
\usepackage{nicematrix}
\usepackage[linewidth=1pt]{mdframed}
\usepackage{mathtools}

\usepackage{pgfplots}
\usepgfplotslibrary{groupplots}
\usepackage{tikz}

\usepackage{subcaption}
\usepackage{stfloats}
\usepackage{xspace}

\definecolor{my-dark-red}{RGB}{183, 28, 28}
\definecolor{my-red}{RGB}{244,67,54}
\definecolor{my-pink}{RGB}{233,30,99}
\definecolor{my-purple}{RGB}{156,39,176}
\definecolor{my-deep-purple}{RGB}{103,58,183}
\definecolor{my-indigo}{RGB}{63,81,181}
\definecolor{my-blue}{RGB}{33,150,243}
\definecolor{my-light-blue}{RGB}{3,169,244}
\definecolor{my-cyan}{RGB}{0,188,212}
\definecolor{my-teal}{RGB}{0,150,136}
\definecolor{my-green}{RGB}{76,175,80}
\definecolor{my-light-green}{RGB}{139,195,74}
\definecolor{my-lime}{RGB}{205,220,57}
\definecolor{my-yellow}{RGB}{255,235,59}
\definecolor{my-amber}{RGB}{255,193,7}
\definecolor{my-orange}{RGB}{255,152,0}
\definecolor{my-deep-orange}{RGB}{255,87,34}
\definecolor{my-brown}{RGB}{121,85,72} 
\definecolor{my-grey}{RGB}{120,120,120}
\definecolor{my-blue-grey}{RGB}{96,125,139}
\definecolor{my-lipics-grey}{rgb}{0.6,0.6,0.61}

\tikzset{
  empty/.style={mark=x,every mark/.append style={mark size=0.0pt}},
  3_aprx/.style={mark=diamond*,my-green,semithick,draw=black,every mark/.append style={mark size=3.2pt}},
  exact/.style={mark=square*,my-blue,semithick,draw=black,every mark/.append style={mark size=2.5pt}},
  exact_smpl/.style={mark=*,my-yellow,semithick,draw=black,every mark/.append style={mark size=2.5pt}},
  lpf/.style={mark=triangle*,my-red,semithick,draw=black,every mark/.append style={mark size=3.2pt}},
  kkp2/.style={mark=pentagon*,my-purple,semithick,draw=black,every mark/.append style={mark size=2.8pt}},
  pfp/.style={mark=pentagon*,my-purple,semithick,draw=black,every mark/.append style={mark size=2.5pt}},
  alz4/.style={mark=halfsquare*,my-deep-orange,semithick,draw=black,every mark/.append style={mark size=2.5pt}},
  alz8/.style={mark=halfsquare right*,my-deep-orange,semithick,draw=black,every mark/.append style={mark size=2.5pt}},
  alz12/.style={mark=halfsquare left*,my-deep-orange,semithick,draw=black,every mark/.append style={mark size=2.5pt}},
  lz77topk/.style={mark=halfdiamond*,my-deep-orange,semithick,draw=black,every mark/.append style={mark size=3.2pt}},
  zstd/.style={mark=square*,my-yellow,semithick,draw=black,every mark/.append style={mark size=2.5pt}},
  gzip/.style={mark=triangle*,my-red,semithick,draw=black,every mark/.append style={mark size=3.2pt}},
  bzip2/.style={mark=*,my-indigo,semithick,draw=black,every mark/.append style={mark size=2.85pt}},
  xz/.style={mark=asterisk,my-light-blue,thick},
  7z/.style={mark=Mercedes star,my-light-green,thick},
  lz4/.style={mark=+,my-deep-orange,thick},
  ssszip_bsc/.style={mark=diamond*,my-green,semithick,draw=black,every mark/.append style={mark size=3.2pt}},
  alz_8/.style={mark=pentagon*,my-pink,semithick,draw=black,every mark/.append style={mark size=2.8pt}},
  alz_6/.style={mark=pentagon*,my-pink,semithick,draw=black,every mark/.append style={mark size=2.8pt}},
  bsc_2047/.style={mark=x,my-grey,semithick,draw=black,every mark/.append style={mark size=2.8pt}},
  marks/.style={
    every mark/.append style={mark size=3pt},
  },
  legendmarks/.style={
    marks,
    every plot/.append style={only marks}
  },
  chart/.style={
    legend label/.style={font={\footnotesize},anchor=west,align=left},
    legend box/.style={rectangle, draw, minimum size=5pt},
    axis/.style={black,semithick,->},
    axis label/.style={anchor=east,font={\footnotesize}},
  }
}

\pgfplotsset{
  title style={yshift=-5pt, font=\bfseries, inner sep=0pt},
  major grid style={thin,dotted,color=my-blue-grey},
  minor grid style={thin,dotted,color=my-grey!45},
  ymajorgrids,
  yminorgrids,
  xmajorgrids,
  xminorgrids,
  every tick label/.append style={font=\footnotesize},
  every axis label/.append style={font=\footnotesize},
  legend cell align=left,
  legend pos=outer north east,
  scaled x ticks=false,
  scaled y ticks=false,
  max space between ticks=30,
  minor tick num=2,
  every axis/.append style={
    line width=0.5pt,
    tick style={
      line cap=round,
      thin,
      major tick length=4pt,
      minor tick length=2pt,
    },
  },
  xticklabel style={
    /pgf/number format/fixed,
    /pgf/number format/precision=3
  },
  yticklabel style={
    /pgf/number format/fixed,
    /pgf/number format/precision=3
  },
  throughput_style/.style={
    scale only axis,
    ylabel near ticks,          
    xlabel near ticks,
    align=center,
    title style={at={(0.5,1.05)}},
    max space between ticks=20,
  },
  legend/.style={
    hide axis,
    scale only axis,
    width=1pt,
    height=1pt,
    xmin=0,xmax=1,
    ymin=0,ymax=1,
    legend style={
        font=\footnotesize,
        anchor=center,
        at={(0.5,0.5)},
        /tikz/every even column/.append style={column sep=0.1cm}
    },
  },
}

\newcommand{\LPF}{\textsc{LPF}\xspace}
\newcommand{\LCP}{\mathrm{LCP}}
\newcommand{\LCE}{\mathrm{LCE}}
\newcommand{\SA}{\mathrm{SA}}
\newcommand{\PA}{\mathrm{PA}}
\newcommand{\ISA}{\mathrm{SA}^{-1}}
\newcommand{\IPA}{\mathrm{PA}^{-1}}
\newcommand{\PSV}{\mathrm{PSV}}
\newcommand{\NSV}{\mathrm{NSV}}
\newcommand{\ioOroR}{\textsc{IO-OROR}\xspace}
\newcommand{\swOroR}{\textsc{SW-OROR}\xspace}
\newcommand{\ssszip}{\texttt{ssszip}\xspace}
\newcommand{\Oh}{{\mathcal O}}
\newcommand{\polylog}{\mathrm{polylog}\,}

\newcommand{\siv}{\mathrm{siv}}
\newcommand{\piv}{\mathrm{piv}}
\newcommand{\SIV}{\mathrm{SIV}}

\newcommand{\RMQ}{\mathrm{RMQ}}
\newcommand{\per}{\mathrm{per}}

\newcommand{\Hread}{H_{\mathrm{read}}}
\newcommand{\Hwrite}{H_{\mathrm{write}}}
\newcommand{\Mpeak}{M_{\mathrm{peak}}}
\newcommand{\Mcur}{M_{\mathrm{cur}}}
\newcommand{\MH}{M_H}

\newcommand{\ceil}[1]{\lceil #1 \rceil}
\newcommand{\floor}[1]{\lfloor #1 \rfloor}

\def\drawannotate(#1,#2,#3){
    \path (axis cs:#1,#2)-- +(10pt,4pt) node[rotate=60,scale=.3,pos=.25] {\phantom{v/}} node[font=\tiny] {#3};
}

\newcommand\annotatenear[1]{\drawannotatenear(#1)}
\def\drawannotatenear(#1,#2,#3,#4,#5){%
    \path (axis cs:#1,#2) -- +(#3,#4)
        node[font=\scriptsize,inner sep=0.9pt,text=black,fill=white,fill opacity=0.9,text opacity=1]{#5};%
}

\def\drawannotatelead(#1,#2,#3,#4,#5){%
    \draw[my-grey,line width=0.2pt] (axis cs:#1,#2) -- +(#3,#4)
        node[font=\scriptsize,inner sep=0.9pt,text=black,fill=white,fill opacity=0.9,text opacity=1]{#5};%
}

\def\drawannotateoffset(#1,#2,#3,#4,#5){
    \path (axis cs:#1,#2)-- +(#3,#4) node[rotate=60,scale=.3,pos=.25] {\phantom{v/}} node[font=\tiny] {#5};
}

\makeatletter
\newcommand{\ELSEIFONLY}[1]{%
  \STATE \algorithmicelse\ \algorithmicif\ #1\ \algorithmicthen
  \begin{ALC@g}%
}
\newcommand{\ENDELSEIFONLY}{%
  \end{ALC@g}%
}
\makeatother

\let\eps\epsilon
\newcommand{\absolute}[1]{\left\lvert#1\right\rvert}
\let\emptystring\varepsilon

\begin{document}

\newcommand\relatedversion{}

\title{\Large Practical and Space-Efficient LZ77 and LZ Pre-Compression via String Synchronizing Sets \relatedversion}
    \author{Jonas Ellert\thanks{CWI Amsterdam, The Netherlands (\email{ellert.jonas@gmail.com})}
    \and Lukas Nalbach\thanks{TU Dortmund University, Germany (\email{lukas.nalbach@tu-dortmund.de})}}
\date{}

\maketitle

\begin{abstract}
The Lempel-Ziv (LZ77) factorization decomposes a text into the least possible number $z$ of phrases that each refer to an earlier occurrence.
It is this phrase count, rather than the encoded size, that governs the size of LZ-based compressed indexes, and computing a factorization with few phrases is a time and space bottleneck in their construction.
In practice, computing LZ77 quickly has so far required building a suffix array.

Ellert [SPIRE 2023] gave algorithms that compute the exact LZ77 factorization, and a 3-approximation of it, in sublinear working space.
They have remained unimplemented, because two of their components resist a direct implementation: a lookup table that degenerates to patterns of length at most two for realistic inputs, and an orthogonal range reporting data structure that is impractical.
We replace both, fine-tune every remaining stage, and obtain the first practical implementation, which runs in space close to the text rather than to the suffix array.
On one thread, our 3-approximation factorizes $12$--$19\times$ faster than the classical LPF algorithm while using $14\times$ less memory; on 32 threads, even our exact algorithm is $1.4$--$2.9\times$ faster than parallel LPF, at $9\times$ less memory.
In practice the approximation ratio stays far below 3.
As a side result, passing only its perfect phrases to a downstream compressor yields a precompressor that is on par with the state of the art \cite{alz} in compression ratio, and better in memory consumption and parallel throughput.
\end{abstract}

\fancyfoot[C]{\small\thepage}
\pagenumbering{arabic}
\setcounter{page}{1}

\section{Introduction}\label{sec:intro}
The exact \emph{Lempel-Ziv factorization (LZ77)} decomposes a length-$n$ text $T$ into $z$ phrases $f_1 f_2 \ldots f_z = T$, where each phrase $f_i$ is either the leftmost occurrence of a single character (a \emph{literal phrase}) or the longest prefix of $f_i \ldots f_z$ that has an earlier occurrence (a \emph{referencing phrase} or \emph{reference}) \cite{LZ76,ZL77,Storer}
\footnote{The original factorization by Lempel and Ziv works slightly differently, and was actually introduced in 1976 \cite{LZ76} (with a practical extension in 1977 \cite{ZL77}).
Nevertheless, as is common in the overwhelming majority of literature, we say LZ77 to refer to the Storer--Szymanski version of the factorization \cite{Storer}.}.
In an \emph{LZ-like factorization}, we drop the requirement that references need to be of maximal length.
The LZ77 factorization minimizes the number of phrases among all LZ-like factorizations \cite{LZ76,Storer}.
An LZ-like factorization is called \emph{$\alpha$-approximate LZ factorization} if it consists of at most $\alpha z$ phrases.
Since each reference has an earlier occurrence, it can be encoded in constant space (e.g., as a position-length pair), allowing the text to be decompressed in a left-to-right fashion.
Hence, using the exact LZ77 factorization, one can encode the text in $\Oh(z)$ space (or in $\Oh(\alpha z)$ space using an $\alpha$-approximate factorization), significantly compressing repetitive texts.

Many practical compressors are built on LZ-like factorizations (e.g., zlib/DEFLATE\footnote{\url{https://www.zlib.net/}, RFC~1951},
gzip\footnote{\url{https://www.gnu.org/software/gzip/}, RFC~1952.},
zstd\footnote{\url{https://facebook.github.io/zstd/}, RFC~8878.},
and LZ4\footnote{\url{https://lz4.org/}}).
However, unlike exact LZ77, these compressors are not too concerned with minimizing the number of phrases; instead, a big part of the achieved compression comes from a highly optimized encoding of the references.
This is appropriate for the setting these tools were designed for: the storage and transmission of comparatively small, low-redundancy files, such as individual web assets served over HTTP, where the encoded byte size and (de)compression speed matter far more than the phrase count\footnote{Both Brotli (RFC~7932) and zstd (RFC~8878) are, for instance, standardized HTTP content encodings.}.
However, in other settings, this approach has major drawbacks:
\begin{description}
    \item[Compressing repetitive collections:] Whenever the text is highly repetitive with long repeating substrings (e.g., a collection of human genomes, which are on average more than $99\%$ identical to one another \cite{1000Genomes,NHGRIvariation}), the achieved compression is inadequate (with, e.g., gzip taking more space than a naive $2z\ceil{\log_2 n}$-bit encoding of exact LZ77).
    \item[Compressed indexing:] In many applications, we do not only want to store the text in compressed form, but also need to support fast pattern matching queries.
    This can be achieved with a compressed LZ index \cite{lz_index,lz_index_2,NavarroSurveyII}, which locates the primary occurrences of a pattern using $\Oh(z)$ space and derives the secondary ones from those, again in $\Oh(z)$ space \cite[Section 3]{NavarroSurveyII}.
    In this setting, a small compressed representation is of little use if it consists of many phrases: the number of phrases controls the size of the index.
    The same applies to grammar-based compression, for which the factorization is the standard starting point: from $z$ phrases one obtains a grammar of size $\Oh(z\log(n/z))$, which is within an $\Oh(\log(n/z))$ factor of the smallest grammar \cite{grammar_from_lz}.
    \item[Compressed computation:] A growing family of algorithms takes an arbitrary LZ-like factorization of $T$ -- exact or approximate, of size $z'$ -- as their \emph{only} input and computes on it in $\Oh(z' \polylog n)$ time, never decompressing $T$.
    Examples are the construction of a small grammar \cite{grammar_from_lz}, of the run-length compressed BWT \cite{KempaKociumaka20bwt}, and of an index supporting $\LCE$, internal pattern matching, $\SA$ and $\ISA$ queries \cite{KempaKociumaka23}.
    Here the phrase count bounds not only the space but also the \emph{running time} of everything built on top of the factorization, so keeping it close to $z$ pays off twice.
\end{description}

For these applications, one should try to minimize the number of phrases, which raises the question whether there is a practical algorithm for computing an LZ-like factorization that consists of close to $z$ phrases.

\subsection{Our contributions.}
We present a practical implementation for computing LZ-like factorizations whose phrase count is close to $z$ -- or exactly $z$ -- intended as a building block for LZ-based compressed indexes, grammar compression, and compressed computation.
Our starting point is Ellert's algorithms \cite{lz77_sublinear}, which compute exact LZ77 in $\Oh(n/\log_\sigma n + z\log^{3+\epsilon} z)$ time and a 3-approximate factorization in $\Oh(n/\log_\sigma n)$ time, both in $\Oh(n/\log_\sigma n)$ words of space.
Contrary to many other LZ algorithms, its key components (string synchronizing sets, geometric data structures, text indexes for short patterns) admit practical implementations -- with two exceptions, which is why the algorithm has remained unimplemented.
The short-pattern lookup table of \cite[Lemma 3]{lz77_sublinear} degenerates to patterns of length $m \leq 2$ for realistic inputs, and the insertion-only orthogonal range one-reporting (\ioOroR) data structure of \cite[Theorem 1]{io_oror_ds} is impractical.
We replace both and obtain the first practical implementation.
In detail:

\begin{itemize}
      \item \textbf{Practical 3-approximation (\Cref{sec:3_apprx_opt}).}
      We replace the short-pattern lookup table with a tunable rolling Karp-Rabin hash index that maps fingerprints of multiple pattern lengths to their previous occurrence positions.
      Combined with an optimized LPF-phrase computation, this factorizes $12$--$19\times$ faster than the classical \LPF algorithm on one thread, while using $14\times$ less memory and producing only $1.40$--$2.02\,z$ phrases.

      \item \textbf{Practical exact algorithm (\Cref{sec:exact_opt}).}
      We implement \ioOroR with a simple grid data structure and decompose the instance into $\sigma$ smaller ones (\Cref{thm:decomp}), shrinking the effective grid area by up to a factor of $\sigma$.
      On 32 threads this is $1.4$--$2.9\times$ faster than the parallel \LPF algorithm at $9\times$ less memory.

      \item \textbf{Interval sampling (\Cref{sec:sampling}).}
      Apart from the \ioOroR queries, the bottleneck in the exact algorithm is the computation of sparse $\SA$- and $\PA$-intervals.
      We pre-compute hash tables for a set of adaptively chosen pattern lengths, allowing interval lookups in expected $\Oh(1)$ time and reducing total running time by $1.7$--$3.2\times$.

      \item \textbf{Parallelization (\Cref{sec:parallel}).}
      We parallelize every stage of both algorithms, obtaining speedups of $2.8$--$6.3\times$ for the 3-approximation and $13$--$26\times$ for the exact algorithm on 32 threads.

      \item \textbf{Practical compression tool (\Cref{sec:compression})}
      The perfect phrases of the 3-approximation yield a practical precompressor that we call \ssszip.
      When paired with \texttt{bsc}\footnote{\url{http://libbsc.com/}}, it is on par with the state of the art \texttt{alz} \cite{alz} in compression ratio, and better in memory consumption and parallel throughput.
\end{itemize}

\subsection*{Related work.}
Computing the exact factorization in $\Oh(n)$ time and $\Oh(n)$ words of working space is a classical exercise: textbook solutions simulate a left-to-right scan of the text using the suffix tree \cite{RodehPrattEven81,Gusfield97,CHL07}, while more practical approaches derive the factorization from the suffix array, e.g., by computing previous and next smaller values (PSV/NSV) over the array \cite{CrochemoreIlieSmyth08,kkp,GotoBannai13,KKP16lazy}.
However, the $\Oh(n)$ words of space may prohibitively exceed the $n \ceil{\log_2 \sigma}$ bits occupied by the text itself, assuming alphabet $[0..
\sigma)$ with ${\sigma \ll n}$.
This motivated a long line of research on more space-efficient algorithms, culminating in algorithms that use $\Oh(n \log \sigma)$ bits, i.e., space close to the text \cite{OhlebuschGog11,KKP13sea,GotoBannai14,Kosolobov15,FischerIKoeppl15,FIKS18}.
Most recently, Kempa and Kociumaka \cite{KempaKociumaka24} showed that, within this space, the exact LZ77 factorization can even be computed in slightly sublinear $\Oh((n \log \sigma) / \sqrt{\log n})$ time.
For a text in read-only memory, $\Oh(z)$ additional working space suffices to compute a $(1+\eps)$-approximation of LZ77 (i.e., an LZ-like factorization consisting of at most $(1+\eps) \cdot z$ phrases), for any constant $\eps > 0$ \cite{FGGK15}.
However, these results were aimed at improving the theoretical complexity and admit no straightforward efficient implementation.

In terms of compressed indexing, solutions based on the (run-length compressed) Burrows--Wheeler transform \cite{bwt} have matured into practical tools -- largely thanks to scalable construction via prefix-free parsing \cite{pfp,KuhnleMBGLM20,RossiOLGB22}.
Practical implementations of LZ-based indexes \cite{lz_index,lz_index_2,FerradaGHP14hybrid} remain scarce, in no small part because computing a factorization with few phrases is a time and space bottleneck in their construction \cite{kkp,KKP13sea}.

\subsection*{Roadmap.}
After preliminaries, we recall Ellert's 3-approximation and exact algorithm (\Cref{sec:approx,sec:exact}) and then present our practical implementation of each (\Cref{sec:algorithmic_optimizations}), including the compression tool \ssszip.
\Cref{sec:experiments} evaluates them against the state of the art.

\section{Preliminaries.}\label{sec:prelim}
A \emph{string} $T$ of length $n = \absolute{T}$ is a sequence of $n$ symbols from some alphabet $\Sigma$.
Throughout this work, we assume $\Sigma = \{1, \ldots ,\sigma\}$ with $\sigma = n^{\Oh(1)}$, implying the lexicographical order of strings in the usual way.
We analyze algorithms in the word-RAM model \cite{wordram} with word width $w = \Theta(\log n)$, and we further assume that $T$ is given in \emph{word-packed} representation: each symbol occupies $\lceil\log_2\sigma\rceil$ bits, so $T$ fits in $\Oh(n/\log_\sigma n)$ words.
For integers $i,j$, we use interval notation $[i,j] = [i, j + 1) = (i - 1, j] = (i - 1, j + 1) = \{ k \in \mathbb Z \mid i \leq k \leq j\}$.
If $1 \leq i \leq j\leq n$, then we define substring $T[i,j] = T[i]T[i+1] \ldots T[j]$, and otherwise $T[i,j] = \emptystring$, where $\emptystring$ is the empty string.
We call $T_i = T[i,n]$ the $i$-th suffix of $T$.
A string $T$ has \emph{period} $p$ if $T[i] = T[i+p]$ for all $i \in [1,|T|-p]$, and $\per(T)$ denotes the smallest such $p$.
Let $\SA[1..n]$ be the \emph{suffix array} \cite{suffix_arrays}: the unique permutation of $[1,n]$ with $T_{\SA[1]} < T_{\SA[2]} < \ldots < T_{\SA[n]}$.
We can build $\SA$ in $\Oh(n)$ time \cite{sais}.
Let $\ISA[1..n]$ with $\ISA[\SA[i]] = i$ be the \emph{inverse suffix array}.
We define the \emph{longest common extension} $\LCE(i,j)$ as the length of the longest common prefix of $T_i$ and $T_j$.
Let the \emph{LCP array} $\LCP[1..n]$ be defined by $\LCP[1] = 0$ and $\LCP[i] = \LCE(\SA[i-1],\SA[i])$ for $i \in [2, n]$.
For an array $A[1..n]$ of integers, the \emph{range minimum query} asks for the position of the minimum value in a given range $[i, j]$ in $A$, i.e.,\ $\RMQ_A(i,j) = \arg\min_{k \in [i,j]} A[k]$.
One can easily see that given $\ISA$, $\LCP$ and an $\Oh(n)$ space and $\Oh(1)$ time $\RMQ_{\LCP}$ data structure, we can answer LCE queries in $\Oh(1)$ time \cite{rmq_o1} by $\LCE(i, j) = \LCP[\RMQ_{\LCP}(\ISA[i] + 1, \ISA[j])]$, where $\ISA[i] < \ISA[j]$ holds w.l.o.g.\ (else, swap $i$ and $j$).

\paragraph{LZ77 factorization.}
The \emph{LZ77 factorization} \cite{LZ76} of $T$ is defined as $f_1 f_2 \ldots f_z = T$ where each phrase $f_j$ at destination $i$ is either a \emph{literal phrase} (if $T[i]$ has no prior occurrence) or a \emph{referencing phrase} with source $s \in [1, i)$ and $l$ is maximal s.t.\ $T[s, s+l) = T[i, i+l)$.
It holds $z = \Oh(n/\log_\sigma n)$.
An \emph{LZ-like factorization} relaxes maximality: referencing phrases may be shorter than the longest match.

\paragraph{LPF algorithm.}
Let $\NSV[1..n]$ and $\PSV[1..n]$ be the \emph{previous} and \emph{next smaller value} arrays w.r.t.\ $\SA$, where $\PSV[i] = \max\{j \in [0,i) : \SA[j] < \SA[i]\}$ and $\NSV[i] = \min\{j \in (i,n] : \SA[j] < \SA[i]\}$ (or 0 if undefined).
Let $\LPF[i] = \arg\max_{s \in [1, i)} \LCE(s,i)$ be the \emph{longest previous factor} of $i$, or $\bot$ if $T[i]$ has no prior occurrence.
Setting $j = \ISA[i]$, $x = \SA[\PSV[j]]$ and $y = \SA[\NSV[j]]$, any match longer than $\LCE(i, x)$ or $\LCE(i, y)$ starts after position $i$, hence $\LPF[i] = \arg\max_{p \in \{x, y\}} \{\LCE(i, p)\}$.
Given $\SA$, $\ISA$, $\PSV$, $\NSV$, and an $\Oh(1)$ time LCE data structure (all computable in $\Oh(n)$ time and words), we extract the LZ77 factorization in one additional $\Oh(n)$-time pass \cite{CrochemoreIlieSmyth08,kkp}.

\paragraph{Karp-Rabin fingerprinting.}
Fix a prime $q = \Theta(n^c)$ with $c > 1$ and a base $b \in [\sigma, q)$.
We define the \emph{Karp-Rabin fingerprint} of $T[i,j]$ as $\varphi(i,j) = \bigl(\sum_{k=i}^{j} T[k] \cdot b^{j-k}\bigr) \bmod q$.
Equal substrings yield equal fingerprints; unequal ones collide with probability $\Oh(1/n)$ for a uniformly random $b$ \cite{karp_rabin}.

\paragraph{String synchronizing sets.}

\begin{definition}[\cite{lsss}]\label{def:sss}
      We call a set $S \subseteq [1, n-2\tau+1]$ \emph{$\tau$-synchronizing} w.r.t.\ $T$ if
      $\mathbf{(i)}$ synchronizing condition: for all $i,j$ with $T[i,i+2\tau) = T[j,j+2\tau)$, it holds $i \in S \Leftrightarrow j \in S$, and
      $\mathbf{(ii)}$ density condition: for every $i \in [1, n-3\tau-2]$, $S \cap [i,i+\tau) = \emptyset$ iff $\per(T[i, i+3\tau-2]) \leq \tau/3$.
\end{definition}

The synchronizing condition ensures that equal length-$\geq 2\tau$ substrings contain samples at identical offsets.
The density condition guarantees $|S| = \Theta(n/\tau)$.

\paragraph{Sparse data structures.}
Let $S = [p_1, \ldots , p_{|S|}]$ with $p_1 < \ldots < p_{|S|}$ be a set of sampled text positions.
Let the \emph{sparse suffix array} $\SA_S$ be the permutation of $[1,|S|]$ sorting $T_{p_1}, \ldots , T_{p_{|S|}}$ lexicographically, and let $\ISA_S$ be its inverse.
Let the \emph{sparse prefix array} $\PA_S$ sort the same positions by their prefixes $T[1,p_i]$ in co-lexicographic order; let $\IPA_S$ be its inverse.

We define the \emph{sparse LCP array} $\LCP_S[1..|S|]$ by $\LCP_S[1] = 0$ and $\LCP_S[i] = \LCE(S[\SA_S[i-1]], S[\SA_S[i]])$ for $i \in [2, |S|]$.
We define the \emph{sparse PSV/NSV arrays} $\PSV_S[1..|S|]$ and $\NSV_S[1..|S|]$ analogously to $\PSV$ and $\NSV$, but w.r.t.\ $\SA_S$.
We define the \emph{sparse LPF array} $\LPF_S[1..|S|]$ by $\LPF_S[i] = \arg\max_{j \in [1, i)} \LCE(p_j, p_i)$ (or 0 if undefined).

We define the \emph{sparse suffix array interval} $\siv_C(P)$ for a pattern $P \in \Sigma^*$ as the maximal interval $[y_1, y_2] \subseteq [1,|S|]$ such that for every $k \in [y_1, y_2]$, the suffix $T_{S[\SA_S[k]]}$ starts with $P$.
We define $\piv_S(P)$ analogously using $\PA_S$.
We can compute both in $\Oh(\log|S|)$ time via binary search over $\SA_S$ and $\PA_S$ using $\Oh(1)$-time LCE queries on $T$ and the reverse of $T$, respectively.

\paragraph{Constant time LCE queries in sublinear space.}
Recall that LCE queries can be answered in $\Oh(1)$ time and $\Oh(n)$ space using $\ISA$, $\LCP$ and an $\Oh(n)$ space and $\Oh(1)$ time $\RMQ_{\LCP}$ data structure.
Similarly, if we build an SSS with $\tau = \log_\sigma n$, then we can answer LCE queries in $\Oh(1)$ time and $\Oh(n / \tau) = \Oh(n / \log_\sigma n)$ space using $S$, $\ISA_S$, $\LCP_S$ and an $\Oh(|S|)$ space and an $\Oh(1)$ time $\RMQ_{\LCP_S}$ data structure (see \cite{lsss} for the details).

\begin{theorem}[\cite{lsss}]\label{thm:lce_sublinear}
      We can build in $\Oh(n/\log_\sigma n)$ time and space a data structure for $\Oh(1)$ time LCE queries.
\end{theorem}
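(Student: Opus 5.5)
We obtain the data structure by replacing every $\Oh(n)$-size component of the classical LCE structure (using $\ISA$, $\LCP$, $\RMQ_{\LCP}$) with its sparse counterpart over a $\tau$-synchronizing set, choosing $\tau = \Theta(\log_\sigma n)$ so that $|S| = \Theta(n/\tau) = \Oh(n/\log_\sigma n)$. We will also use that $\tau$ consecutive symbols of $T$ fit in $\Oh(1)$ machine words, since we may take $\tau \le \epsilon\log_\sigma n$ for a small constant $\epsilon$. This makes short comparisons constant-time via word-packed XOR and most-significant-bit computation.

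\textbf{Step 1: construction.} We first construct $S$ in $\Oh(n/\tau)$ time from the packed text. This is the nontrivial ingredient of \cite{lsss}; we take it as given, together with their sorting of the sampled suffixes. That sorting yields $\SA_S$, $\ISA_S$ and $\LCP_S$ in $\Oh(n/\log_\sigma n)$ time. The idea is to rank the length-$3\tau$ fragments starting at sampled positions by radix-sorting packed words (or with a lookup table of size $\sigma^{3\tau} = n^{\Oh(\epsilon)}$). The sequence of these ranks, ordered by position, forms a string of length $|S|$. By the synchronizing condition, sorting its suffixes with linear-time suffix sorting \cite{sais} gives $\SA_S$, and Kasai-style LCP computation on it gives $\LCP_S$. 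We then add a linear-size $\Oh(1)$-time RMQ structure \cite{rmq_o1} on $\LCP_S$.

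\textbf{Step 2: querying $\LCE(i,j)$.} First, we compare $T[i,i+3\tau)$ with $T[j,j+3\tau)$ using word operations. If they mismatch, we report the mismatch offset. Otherwise, the two fragments agree on $3\tau$ symbols, and we split into two cases according to the density condition.

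\emph{Non-periodic case.} If $\per(T[i,i+3\tau-2]) > \tau/3$, then $S\cap[i,i+\tau)\neq\emptyset$. We take the successor $s_i = \min(S\cap[i,n])$, which lies within $\tau$ of $i$. By the synchronizing condition and equality of the windows, $s_j = s_i - i + j$ is the corresponding sample near $j$. Then $\LCE(i,j) = (s_i - i) + \LCE(s_i,s_j)$, and the sparse LCE is answered by $\LCP_S[\RMQ(\ISA_S[s_i]+1,\ISA_S[s_j])]$. The successor lookup can be done in $\Oh(1)$ with a bitvector over $[1,n]$ with rank support in $o(n)$ bits. Alternatively, a table indexed by $\lfloor i/\tau\rfloor$ storing the first sample in each block uses $\Oh(n/\tau)$ words, and from the block start we need only check one further block.

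\emph{Periodic case.} If the period $p = \per(T[i,i+3\tau-2]) \le \tau/3$, the fragment lies in a maximal run with period $p$. Both $i$ and $j$ start in runs with the same period and the same phase, since their windows are equal. The LCE is then $\min(e_i - i, e_j - j)$ if these run extents differ, where $e_x$ is the end of the run containing $x$. If the extents are equal, we jump past the run ends and continue with the non-periodic case. To compute the run end in $\Oh(1)$, we precompute, for every block of $\tau$ positions that is periodic, the end of its run. This takes $\Oh(n/\tau)$ words, since such runs are detected during the construction of $S$ in \cite{lsss}.

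\textbf{Main obstacle.} The hard part is Step 1, namely building $S$ and sorting the sampled suffixes in truly $\Oh(n/\log_\sigma n)$ time rather than $\Oh(n)$; this is inherited from \cite{lsss}. A secondary difficulty is the careful handling of runs in the periodic case, where we must make sure the queried position maps to a precomputed run end in constant time, including near the text boundary $n-3\tau$. Positions near the boundary are handled by direct packed comparison.
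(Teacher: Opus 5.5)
Your architecture matches what the paper takes from \cite{lsss}: a $\tau$-synchronizing set with $\tau=\Theta(\log_\sigma n)$, then $S$, $\ISA_S$, $\LCP_S$ and RMQ on $\LCP_S$. Your non-periodic case is also correct. The gap is in the periodic case with equal run extents. The step ``jump past the run ends and continue with the non-periodic case'' assumes that the position right after a run is aperiodic, but it can begin another run of period at most $\tau/3$. Suppose $T$ contains $(a^{3\tau}b^{3\tau})^r$ twice and you query the two occurrences. Each round lands in a new pair of equal-length runs, so the query takes $\Theta(r)$ rounds, up to $\Theta(n/\tau)$, not $\Oh(1)$.

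The missing idea is that $S$ already determines run ends. Let $s=\min(S\cap[i,n])$ with $s-i\ge\tau$. Apply the density condition to the windows starting in $[i,s-\tau]$, which are all periodic. Apply it also to the window starting at $s-\tau+1$, which contains $s$ and is therefore aperiodic. Together these show that the period-$p$ run from $i$ is exactly $T[i,s+2\tau-2]$. So after your $3\tau$-symbol check, let $s_i,s_j$ be the successors of $i,j$ in $S$. If $s_i-i\ne s_j-j$, both offsets are at least $\tau$, because smaller offsets are synchronized, and $\LCE(i,j)=\min(s_i-i,s_j-j)+2\tau-1$. Otherwise $\LCE(i,j)=(s_i-i)+\LCE(s_i,s_j)$, answered via $\LCP_S$ and RMQ. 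This needs neither a periodicity test nor a run-end table, up to boundary conventions near the end of $T$. Within your own framework, a fix is to jump not past the run ends but to the start of the first length-$(3\tau-1)$ window containing the period-breaking symbol. That window is provably aperiodic.

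Your sketch of Step~1 is also incorrect as stated, although you defer to \cite{lsss}. Ranking the fixed-length fragments $T[s_k,s_k+3\tau)$ and suffix-sorting the rank string does not yield $\SA_S$. If $s_{k+1}-s_k>\tau$, the gap is a periodic run whose length the metacharacter does not record. Two samples that agree on $3\tau$ symbols but start runs of different lengths then have their next samples at different offsets. The rank string ends up comparing fragments at the two run ends instead of the first mismatch. \cite{lsss} instead ranks the variable-length fragments $T[s_k,s_{k+1}+2\tau)$, which the synchronizing condition makes prefix-free, and encodes the long periodic ones compactly. Likewise, LCP values computed on the rank string count metacharacters and must be converted to symbol lengths. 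Finally, your block-table alternative for successor queries fails when a block holds several samples; the rank-supported bitvector is the variant that works.
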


\section{Algorithms}
We recall the two algorithms of \cite{lz77_sublinear} that our implementation builds on: the LZ77 3-approximation, which computes an LZ-like factorization of at most $3z$ phrases, and the exact algorithm, which computes the LZ77 factorization itself on top of it.

\subsection{LZ77 3-approximation}\label{sec:approx}
The 3-approximation proposed in \cite{lz77_sublinear} 
computes an LZ-like factorization in which phrases are either \emph{perfect} or not perfect.
Consider an LZ-like factorization $f_1\dots f_{z'}$, then some phrase $f_i$ is perfect if and only if either $i = z'$, or $f_{i}\cdot f_{i + 1}[1]$ has no occurrence in $f_1\dots f_{i}$, i.e., the phrase is indeed of maximal length.
It is easy to see that the number of perfect phrases is at most $z$.

The algorithm first uses \Cref{thm:lce_sublinear} to compute a \emph{gapped LZ factorization} with only perfect phrases.

\begin{definition}\label{def:gapped_lz}
      We call the sequence $f_1 g_1 r_1 \ldots f_{z'} g_{z'} r_{z'} = T$ a gapped LZ factorization if
      each $f_x$ is a perfect phrase (of maximal length) starting at some sample $p_x \in S$,
      each $g_x$ is a (possibly empty) gap containing no sample from $S$, and
      each $r_x$ is a (possibly empty) referencing phrase.
      Every gapped LZ factorization satisfies $z' \leq z$ \cite{lz77_sublinear}.
\end{definition}

Starting from a $\tau$-SSS $S$ with $\tau = \lfloor\log_2 n / (8\lceil\log_2\sigma\rceil)\rfloor$, the algorithm consists of three phases.
It additionally uses a lookup table of size $\Oh(n/\log_\sigma n)$ that stores the leftmost occurrence of any length-$m \leq 2\tau$ pattern \cite{lz77_sublinear}.

\textbf{Phase 1 (LPF phrases at sampled positions).}
We scan $S$ left-to-right.
For each sample $p = S[i]$, we compute $\LPF_S[i]$ via $\PSV_S$, $\NSV_S$, and the $\Oh(1)$-LCE data structure of \Cref{thm:lce_sublinear}.
If $T[p, p+2\tau)$ has a previous occurrence (case 1; detected via the lookup table), then setting $j = \ISA_S[i]$ implies $\LPF_S[i] \in \{S[\SA_S[\PSV_S[j]]],\, S[\SA_S[\NSV_S[j]]]\}$ (whichever gives the longer LCE).
Otherwise (case 2), we can find $\LPF_S[i]$ using the lookup table in $\Oh(\tau)$ time.

\textbf{Phase 2 (shortening long gaps).}
Every gap $g_x$ with $|g_x| > 3\tau$ is periodic with period $p \leq \tau/3$ by the density condition of the SSS.
We append a referencing (possibly non-perfect) phrase $r_x$ covering $g_x[3\tau+1,|g_x|]$ with source $i_x + 3\tau - p$ (where $i_x$ is the gap's start), shortening $g_x$ to length $3\tau$.
We retrieve the period $p$ in $\Oh(1)$ from a precomputed table of size $\Oh(\sqrt{n} \log_\sigma n)$ \cite{lz77_sublinear}.

\textbf{Phase 3 (closing short gaps).}
Each remaining gap $g_x$ with $|g_x| \leq 3\tau$ is either \emph{referencing} (has a previous occurrence, which we close in $\Oh(\tau)$ time via the pattern table) or \emph{non-referencing} (a leftmost occurrence).
We replace each non-referencing gap with literal and referencing phrases from left-to-right:
We iteratively find its longest prefix with a previous occurrence in $\Oh(\tau)$ time via the pattern table.
Note that only the first such added phrase can be non-perfect.

Since we add at most 2 non-perfect phrases to close any one gap of the factorization from \Cref{def:gapped_lz}, the number $z''$ of resulting phrases satisfies $z'' \leq z' + 2 z' \leq 3z$.
Ellert shows that all three phases run in $\Oh(n / \log_\sigma n)$ time and space in total \cite{lz77_sublinear}.

\begin{theorem}[\cite{lz77_sublinear}]\label{thm:3aprx}
      We can compute an LZ77-like factorization with at most $3z$ phrases in $\Oh(n/\log_\sigma n)$ time and $\Oh(n/\log_\sigma n)$ words of space.
\end{theorem}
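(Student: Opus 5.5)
The plan is to follow the three-phase construction described above and verify two things: the phrase count is at most $3z$, and every phase stays within $\Oh(n/\log_\sigma n)$ time and words.

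\emph{Setup.} Fix $\tau = \lfloor\log_2 n/(8\lceil\log_2\sigma\rceil)\rfloor = \Theta(\log_\sigma n)$. Build a $\tau$-synchronizing set $S$ with $|S| = \Theta(n/\tau)$ together with $\SA_S$, $\ISA_S$, $\LCP_S$, $\PSV_S$ and $\NSV_S$. Invoke \Cref{thm:lce_sublinear} for $\Oh(1)$-time LCE queries. All of this takes $\Oh(n/\log_\sigma n)$ time and space on word-packed input, using the construction of \cite{lsss} and a stack-based PSV/NSV pass.

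Next, build the pattern table for all patterns of length at most $2\tau$. There are at most $\sigma^{2\tau} \le \sigma^{\log_\sigma n/4} = n^{1/4}$ such patterns. To fill it in $\Oh(n/\tau)$ time, I would not scan every text position. A pattern's leftmost occurrence $T[i,i+m)$ lies inside the aligned block $T[k\tau+1,(k+2)\tau+\tau]$ for $k=\lfloor (i-1)/\tau\rfloor$. So it suffices to process the $\Oh(n/\tau)$ blocks, each of length $3\tau$ and hence one machine word. For each block, a precomputed universal table (of size $\sigma^{3\tau}\cdot\mathrm{poly}(\tau) = o(n/\tau)$) lists the distinct patterns it contains, with their first offsets. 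I expect this to be the delicate part. A naive update touches $\Oh(\tau^2)$ patterns per block, so I would let each block update only patterns whose first occurrence within the block is new relative to the previous block. Then I would argue an amortized bound, or alternatively fall back to a bit-vector of ``seen'' patterns of size $n^{1/4}$.

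The period table of Phase~2 is a universal table over strings of length $3\tau$ with $\Oh(1)$ lookup, and it fits in $\Oh(\sqrt n\log_\sigma n)$ space.

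\emph{Correctness and counting.} Phase~1 produces a gapped LZ factorization. In case~1, $T[p,p+2\tau)$ has an earlier occurrence $q<p$, and by the synchronizing condition $q$ is a sample. Hence the longest previous factor from $p$ starts at an earlier sample, and because $\SA_S$ is lexicographic it is attained at $\PSV_S$ or $\NSV_S$ of $\ISA_S[i]$. In case~2, the longest match is shorter than $2\tau$ and is found by the table in $\Oh(\tau)$ steps; case~2 occurs at most $n^{1/4}$ times, since each occurrence is the first of a new pattern. Phases 1--3 therefore give, per gapped triple $f_x g_x r_x$, the perfect phrase $f_x$, at most one non-perfect phrase from Phase~2, and closing phrases from Phase~3 of which only the first is non-perfect. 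Every further phrase is perfect by greedy maximality, and perfect phrases number at most $z$. Since $z'\le z$ by \Cref{def:gapped_lz}, we get at most $z + 2z' \le 3z$ phrases.

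\emph{Time.} Phase~1 costs $\Oh(1)$ per sample in case~1, plus $\Oh(\tau)$ per case-2 sample, giving $\Oh(n/\tau + n^{1/4}\tau)$. Phase~2 is $\Oh(1)$ per gap. In Phase~3 each gap has length at most $3\tau$, and each step extends the current phrase by table probes of $\Oh(\tau)$ total cost. I would charge this cost to the gap's neighbouring sample, and the number of gaps is at most $|S|+1$. To bound the number of literal or short phrases within a single gap I would use that each one is perfect, so the count is absorbed by $z = \Oh(n/\log_\sigma n)$. The main remaining risk is the Phase~3 cost per gap. If necessary, I would replace the $\Oh(\tau)$ per-phrase probing with word-level probes of $\Oh(1)$ per length, giving $\Oh(\tau)$ per gap and $\Oh(n)$ overall. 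I would then tighten this by noting that the table answers a longest-prefix query for a word-packed string of length at most $2\tau$ in $\Oh(1)$ after precomputation, so each gap costs $\Oh(1)$ per phrase, and the total is $\Oh(n/\log_\sigma n)$.
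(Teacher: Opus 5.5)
Your three-phase structure and your phrase count follow the paper's sketch. Stating the bound as $z + 2z' \le 3z$ (all perfect phrases, plus at most two non-perfect ones per gap) is the right way to put it, and your observation that case~2 of Phase~1 occurs at most $n^{1/4}$ times is sound. The paper does not prove the $\Oh(n/\log_\sigma n)$ time bound itself; it takes it from \cite{lz77_sublinear}. Your own derivation has two steps that do not close.

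\textbf{Table construction.} Neither of your remedies avoids enumerating $\Theta(\tau^2)$ patterns per block. Updating only patterns that are new relative to the previous block does not amortize. Take $T = U^{n/|U|}$ for a suitable aperiodic $U$ with $|U| \gg \tau$. In every block, a constant fraction of the $\Theta(\tau^2)$ patterns does not occur in the previous window, although all of them already occurred in the first copy of $U$. A bit-vector of seen \emph{patterns} still forces you to test each of these patterns in every block. Either way the cost is $\Theta(n\tau)$. The missing idea is to mark seen \emph{windows}: index the bit-vector by the whole length-$3\tau$ window content, of which there are at most $\sigma^{3\tau} \le n^{3/8}$. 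If window $k$ equals an earlier window $k' < k$, every pattern starting in block $k$ also starts $(k-k')\tau$ positions earlier. So block $k$ contains no leftmost occurrence and can be skipped after $\Oh(1)$ work. Only first-occurrence windows are enumerated, for $\Oh(n/\tau + n^{3/8}\tau^2) = \Oh(n/\tau)$ in total.

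\textbf{Phase~3.} There can be $\Theta(n/\tau)$ gaps, for instance on an incompressible text, so any $\Theta(\tau)$ cost per gap already gives $\Theta(n)$. Your closing claim that the table answers a longest-prefix query in $\Oh(1)$ is unjustified as stated. The query asks for the longest prefix of $X = T[c, c+2\tau)$ occurring before $c$, which depends on $c$ and not only on $X$.

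It can be rescued. If that prefix is shorter than $X$, then $c$ is the leftmost occurrence of $X$, so the answer is a function of $X$ and can be precomputed alongside it. You would still have to handle phrases longer than $2\tau$ inside gaps of length up to $3\tau$ without creating an extra non-perfect phrase.

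A simpler route uses the paper's split into referencing and non-referencing gaps for the time bound:
\begin{itemize}
\item A referencing gap is closed by a single lookup of its whole word-packed content. This is $\Oh(1)$ if the table covers patterns of length up to $3\tau$; build it from length-$4\tau$ windows, of which there are still at most $n^{1/2}$.
\item A non-referencing gap is the leftmost occurrence of its content, so distinct non-referencing gaps have distinct contents of length at most $3\tau$. There are at most $2\sigma^{3\tau} = \Oh(n^{3/8})$ of them, and even $\Oh(\tau^2)$ greedy work on each sums to $o(n/\tau)$.
\end{itemize}
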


\subsection{Exact LZ77 Algorithm}\label{sec:exact}
The exact algorithm of \cite{lz77_sublinear} builds on the 3-approximation.
It first constructs a special sample set $C \subseteq [1,n]$ and lexicographically and colexicographically sorts its samples.
Computing each perfect phrase then reduces to finding sparse prefix- and suffix array intervals and \ioOroR queries.

\begin{figure*}[t]
      \centering
      \includegraphics[width=0.75\linewidth]{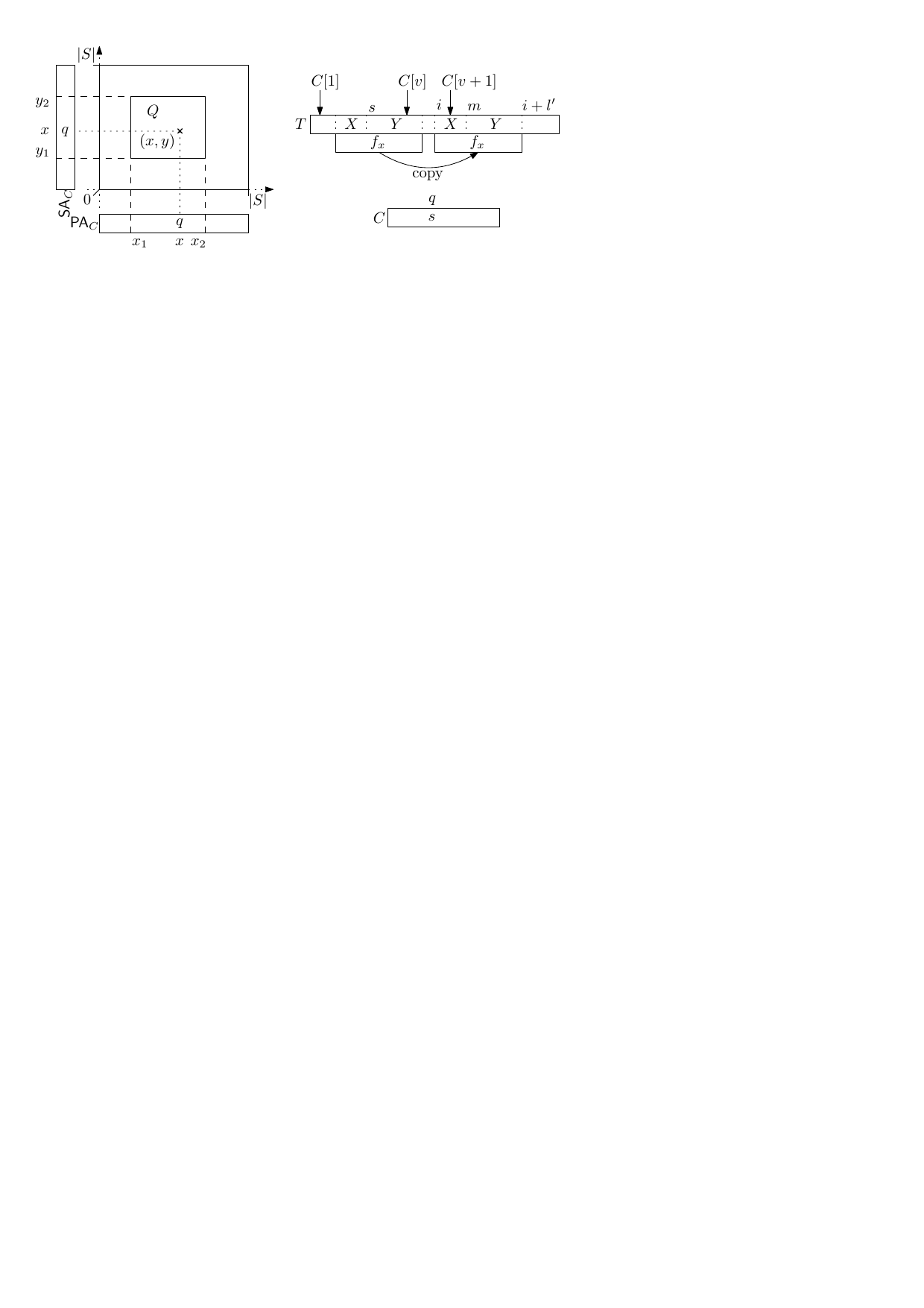}
      \caption{Illustration of a far source. $X$ is the head, and $Y$ is the tail.}
      \label{fig:far_source}
\end{figure*}

\begin{lemma}[\cite{lz77_sublinear}]\label{lem:sample_set}
      Given an LZ-like factorization with at most $3z$ phrases, we can construct in $\Oh(n / \log_\sigma n)$ time and space a sample set $C \subseteq [1,n]$ of size $\Oh(n / \log_\sigma n)$ that is \emph{$\delta$-dense} (every window $[i,i+\delta) \subseteq [1,n]$ contains a sample) and \emph{leftmost-substring-covering} (the leftmost occurrence $T[i,i+l)$ of every substring of $T$ contains a sample).
\end{lemma}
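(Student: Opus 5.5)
We take $C$ to be the union of two sets: a regular grid $C_1$ that provides density, and a set $C_2$ built from the phrase boundaries of the given factorization that provides the covering property. We set $\delta = \tau = \Theta(\log_\sigma n)$ (more precisely, a $\delta$ with $n/\delta = \Oh(n/\log_\sigma n)$). Let $C_1 = \{1, 1+\delta, 1+2\delta, \ldots\}$. Every window $[i,i+\delta)$ then contains a multiple-of-$\delta$ offset, so $C_1$ is $\delta$-dense and $|C_1| = \Oh(n/\log_\sigma n)$.

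For the leftmost-substring-covering property, let $f_1\cdots f_{z'}$ with $z' \le 3z$ be the given LZ-like factorization, and let $b_1 < \ldots < b_{z'}$ be the phrase start positions. The key structural fact is that a leftmost occurrence $T[i,i+l)$ cannot lie strictly inside a referencing phrase. Suppose it did, say $[i,i+l) \subseteq (b_k, b_{k+1})$ with $f_k$ copying from source $s < b_k$. Then $T[i-b_k+s,\, i-b_k+s+l)$ would be an earlier occurrence of the same string, a contradiction. Literal phrases have length one, so their starts lie in $C_2$ anyway. More carefully, a leftmost occurrence either contains some phrase start $b_k$, or it lies in $[b_k, b_{k+1})$ and starts exactly at $b_k$.

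The argument above is already enough, since in both cases $[i,i+l)$ contains some $b_k$. So we take $C_2 = \{b_1,\ldots,b_{z'}\}$. Its size is $z' \le 3z = \Oh(n/\log_\sigma n)$, because $z = \Oh(n/\log_\sigma n)$ as stated in the preliminaries.

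It remains to bound the construction cost. $C_1$ is generated in $\Oh(n/\delta)$ time. $C_2$ is read off the factorization in $\Oh(z')$ time. Merging the two sorted lists into a sorted $C$ takes linear time in their total size, so everything fits in $\Oh(n/\log_\sigma n)$ time and space, which proves the lemma.

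The main point requiring care is the containment argument. One has to check the boundary case of an occurrence that starts exactly at $b_k$, and the case where the source overlaps the destination. The shifted occurrence starts at $i-b_k+s < i$ regardless of any overlap, so the argument still holds. Which $\delta$ the later algorithm needs does not affect this proof, provided $\delta = \Omega(\log_\sigma n)$.
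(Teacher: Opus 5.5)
Your proof is correct and follows the paper's construction: $C$ is the union of the given factorization's phrase boundaries and an evenly spaced grid with spacing $\delta = \Omega(\log_\sigma n)$, giving $|C| = \Oh(z + n/\delta)$. You use phrase starts where the paper uses phrase endpoints $E$, which works equally well, and your shifting argument spells out the covering property that the paper's one-line proof leaves implicit.
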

\begin{proof}
      Take $C = E \cup \Delta$, the union of the phrase endpoints $E$ of the 3-approximation and $\Delta = \{i\delta \mid i \in [1, \lfloor n/\delta \rfloor]\}$.
      Setting $\delta = \Omega(\log_\sigma n)$ yields $|C| = \Oh(z + n / \log_\sigma n) = \Oh(n / \log_\sigma n)$.
\end{proof}

\begin{definition}[\cite{lz77_sublinear}]\label{def:io_oror}
      Let $\pi$ be a permutation of $[1,N]$ and let $P = \{(i,\pi(i)) \mid i \in [1,N]\}$ be the set of valid points.
      The \emph{insertion-only orthogonal range one-reporting} (\ioOroR) problem is to maintain an initially empty set $R \subseteq P$ under
      \emph{Insert}$(p)$: add $p \in P \setminus R$ to $R$; and
      \emph{Query}$(Q)$ with $Q = [x_1,x_2] \times [y_1,y_2]$: return any $p \in Q \cap R$, or report $Q \cap R = \emptyset$.
\end{definition}

Now, we can discuss the main algorithm.

\begin{theorem}\label{thm:exact_lz77}
      We can compute the exact LZ77 factorization in $\Oh(n / \log_\sigma n + z \log^{3+\epsilon} z)$ time and $\Oh(n / \log_\sigma n)$ space.
\end{theorem}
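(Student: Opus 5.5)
We first run the 3-approximation of \Cref{thm:3aprx} in $\Oh(n/\log_\sigma n)$ time and space. This yields at most $3z$ phrases. From them we build the sample set $C$ of \Cref{lem:sample_set} with $\delta = \Theta(\log_\sigma n)$, which is $\delta$-dense and leftmost-substring-covering. We then sort $C$ lexicographically by suffixes and co-lexicographically by prefixes, giving $\SA_C$ and $\PA_C$. For this we use the $\Oh(1)$-time LCE structure of \Cref{thm:lce_sublinear} on $T$ and on its reverse, inside a comparison sort. That costs $\Oh(|C|\log|C|)$ time, which is too much in general. Instead we sort the endpoint samples $E$ by comparison in $\Oh(z\log z)$ time, and the regular samples $\Delta$ by word-packed radix sort of $\tau$-length prefixes followed by sparse suffix sorting via the synchronizing set. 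Merging the two orders with LCE queries keeps this within $\Oh(n/\log_\sigma n + z\log z)$.

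Next we compute the factorization left to right. Suppose the current phrase starts at $i$. The key observation is that any earlier occurrence $T[s,s+l)$ of $T[i,i+l)$ can be replaced by the leftmost occurrence of that substring, and by leftmost-substring-covering this leftmost occurrence contains a sample $c\in C$. So $T[i,i+l)$ has a previous occurrence iff there is a split $T[i,i+l) = XY$ and a sample $c$ such that $c<i$ (strictly, $c-|X|<i$), $T[1,c)$ ends with $X$, and $T_c$ starts with $Y$. These are the head and tail of \Cref{fig:far_source}. The first condition places $\IPA_C(c)$ in $\piv_C(X)$ and the second places $\ISA_C(c)$ in $\siv_C(Y)$. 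We maintain an \ioOroR instance over points $(\ISA_C(c),\IPA_C(c))$ for $c\in C$, inserting each sample once the scan passes it. A nonempty answer to a query $\siv_C(Y)\times\piv_C(X)$ then certifies an occurrence. Sources that overlap the destination, i.e.\ close sources with $c\ge i$, are handled separately via the gapped factorization and periodicity: such a phrase contains a sample at small offset, and we use the $\LPF_S$ computation of Phase 1.

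To find the maximal $l$, we use exponential search followed by binary search on $l$. Each candidate $l$ must be checked against every split position $|X|$. By $\delta$-density, though, only splits at the $\Oh(1+l/\delta)$ sample positions inside the destination are relevant in a sense that can be made precise. So we restrict to splits where the head is short, $|X|\le\delta$. Each check then takes $\Oh(\delta)$ interval computations, with each $\siv/\piv$ found by binary search in $\Oh(\log z)$ time, plus \ioOroR queries. With an \ioOroR structure achieving $\Oh(\log^{2+\epsilon} N)$ query time and polylog insertion \cite{io_oror_ds}, the cost per phrase is $\Oh(\log^{3+\epsilon} z)$ after amortisation: there are $\Oh(\log z)$ search steps, each making polylog queries. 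Summed over the $z$ phrases, this gives $z\log^{3+\epsilon}z$. Insertions total $|C|$, and we must ensure they are cheap, for example by batching the $\Delta$-points.

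The main obstacle is the accounting in this last step. Interval computations and queries must be charged so that the total is $\Oh(z\log^{3+\epsilon}z)$ rather than $\Oh(n/\log_\sigma n\cdot\polylog)$. This requires that the number of \ioOroR insertions of costly type and the number of split candidates tried per phrase do not scale with $|C|$ or $l/\delta$. We would first try to precompute all $\siv/\piv$ for short heads using word-packed tables of size $\Oh(n/\log_\sigma n)$. We would then argue that only samples from $E$, together with a constant number of $\Delta$-samples per phrase, need to enter the expensive structure.
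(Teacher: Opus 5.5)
Your skeleton (3-approximation, sample set $C$, sorting by suffixes and prefixes, head/tail queries to an \ioOroR structure, leftmost-occurrence argument) matches the paper's. However, there is a genuine gap exactly where you flag it, and it comes from your choice $\delta = \Theta(\log_\sigma n)$. With it, $|C|$ can be $\Theta(n/\log_\sigma n)$, so both the comparison sort and the $|C|$ insertions at polylogarithmic cost exceed the budget. Your workarounds do not close this. The radix/SSS-based sorting of $\Delta$ is only sketched. The suggestion that only $E$ plus $\Oh(1)$ samples of $\Delta$ per phrase need to enter the structure is unsupported: for a phrase of length $l \geq \delta$, the argument needs a sample in the first $\delta$ positions of the \emph{unknown} leftmost occurrence $\lambda$. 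That sample may be any sample of $\Delta$, so you cannot decide in advance which ones to leave out. The missing idea is to trade per-phrase work for fewer samples. The paper takes $\delta = \Theta(\log^2 n)$, so $|C| = \Oh(z + n/\log^2 n)$, and plain comparison sorting with $\Oh(1)$-time LCE queries costs $\Oh(|C|\log|C|) = \Oh(z\log n + n/\log n)$. This lies within $\Oh(n/\log_\sigma n + z\log z)$: if $z \leq n/\log^2 n$ then $z\log n \leq n/\log n$, and otherwise $\log z = \Theta(\log n)$. The insertions shrink accordingly. The price is $\Oh(\delta)$ head lengths and close sources per phrase, which is polylogarithmic and charged to the $z \cdot \polylog n$ term; the paper takes the detailed accounting from Ellert.

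Your handling of close sources is also not sound. The sparse LPF values of Phase 1 exist only at samples of the synchronizing set $S$ and only consider sources at such samples, whereas the phrase start $i$ is arbitrary. The paper simply computes $\LCE(s,i)$ for every $s \in [i-\delta, i)$ in $\Oh(\delta)$ time. This covers precisely your problematic case. The head length $k$ is at most $\delta$ because there is a sample $C[q] \in [\lambda, \lambda + \min(l,\delta))$: by leftmost-substring-covering if $l < \delta$, and by $\delta$-density if $l \geq \delta$. Note that this uses samples in the source, not in the destination as you suggest. Hence if $C[q] \geq i$ is not yet inserted, then $\lambda \geq C[q] - \delta + 1 > i - \delta$, so $\lambda$ is a close source. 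Conversely, if $\lambda < i - \delta$, then $C[q] < i$ is already in the structure.
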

\begin{proof}
We use \Cref{thm:lce_sublinear}, \Cref{thm:3aprx} and \Cref{lem:sample_set} to obtain $C$.
Then, we build $\SA_C$, $\PA_C$, $\ISA_C$ and $\IPA_C$ using comparative sorting and $\Oh(1)$ time LCE queries in $\Oh(|C| \log |C|)$ time.
Now, we build the data structure from \cite[Theorem 1]{io_oror_ds} for \ioOroR (see \Cref{def:io_oror}) on the point set $P = \{(\IPA_C[i], \ISA_C[i]) \mid i \in [1, |C|]\}$.
This takes $\Oh(|C|)$ time and yields insert- and query times $i_R = \Oh(\log^{3+\epsilon} N)$ and $q_R = \Oh(\log N)$.

The exact factorization is then computed left-to-right.
Suppose we have already factorized $T[1, i)$, i.e.,\ $i$ is the destination of the next phrase to compute.
We maintain the following invariants:

\textit{Invariants}.
We maintain an index $v$, the \emph{frontier}, defined as the index such that $C[v] < i \leq C[v+1]$, that is, $v$ is the index of the last sample before the current phrase destination $i$, and we adjust $v$ after computing a phrase.
Furthermore, we maintain that $R$ contains exactly the points $P' = \{(\IPA_C[i], \ISA_C[i]) \mid i \in [1, v]\} \subseteq P$.

\textit{Computing a perfect phrase}.
Let $f$ be the perfect phrase starting at position $i$.
We find the length $l$ and a source $s$ of $f$ via two methods, and keep track of the source that yields the longest match:

\begin{enumerate}
\item \textbf{Close sources}:
      Here, we consider each close source $s \in [i-\delta,i)$ by computing $\LCE(s, i)$.
      This takes $\Oh(\delta)$ time.
      
\item \textbf{Far sources}:
      Here we try to find a phrase $T[i, i + l')$ decomposed into a \emph{head} $T[i,m]$ and a \emph{tail} $T[m, i + l')$, where $m \in [i, i + l')$ is called the \emph{split point} (see right in \Cref{fig:far_source}).
      For each head length $k \in [1,\delta]$, we set $m = i+k-1$ and perform an exponential search for the maximum $l'$ s.t.\ there is a $q \in [1, v]$ with $T[i,m] = T(C[q] - k, C[q]]$ and $T[m, i + l') = T[C[q], C[q] + l' - k)$, i.e.,\ we can use $C[q] - k + 1$ as a source of a length-$l'$ phrase.
      Whether this $q$ exists is equivalent to
      \begin{align*}
            &\PA_C[x_1,x_2] \cap \SA_C[y_1,y_2] \cap [1,v] \neq \emptyset \\
            \Leftrightarrow &[x_1,x_2] \times [y_1, y_2] \cap P' \neq \emptyset,
      \end{align*}
      where $[x_1,x_2] = \piv_C(T[i, m])$ and $[y_1,y_2] = \siv_C(T[m, i+l'))$.
      Hence, we can find such $q$ (if it exists) via an \ioOroR query to $R$ with rectangle $Q = [x_1, x_2] \times [y_1, y_2]$ (see left in \Cref{fig:far_source}).
      We obtain $q$ from the point $p = (x, y)$ returned by the query by $q = \PA_C[x]$ or $q = \SA_C[y]$.
\end{enumerate}

To show that we indeed find a perfect phrase, consider the leftmost occurrence $\lambda$ of this perfect phrase.

\textbf{Case 1} $\lambda \in [i - \delta, i)$:
In this case, we find $\lambda$ in method 1.

\textbf{Case 2.1} $\lambda < i - \delta$ and $l < \delta$:
Then there is a sample $C[q] \in [\lambda, \lambda + l)$, because $C$ is leftmost-substring covering.

\textbf{Case 2.2} $\lambda < i - \delta$ and $l \geq \delta$:
Then there is a sample $C[q] \in [\lambda, \lambda + \delta)$ because $C$ is $\delta$-dense.

Thus, in both Case 2.1 and 2.2, for $k = C[q] - \lambda + 1$ and $l' = l$ during method 2, we have $q \in \PA_C[x_1, x_2] \cap \SA_C[y_1, y_2] \cap [1, v]$.
Hence, the query to $R$ with the rectangle $Q = [x_1, x_2] \times [y_1, y_2]$ will report a point $(\IPA_C[u], \ISA_C[u])$, where either $u = q$, or $u \in (q, v]$ is another sample index such that $T(C[u] - k, C[u] + l - k] = T[i, i + l)$.

Ellert analyzes the running time in detail and shows that, for a suitable choice of $\delta = \Theta(\log^2 n)$, computing all phrases and performing all insertions takes overall $\Oh(n / \log_\sigma n + z \log^{3+\epsilon} z)$ time and $\Oh(n / \log_\sigma n)$ space \cite{lz77_sublinear}.
\end{proof}

\section{Algorithmic Optimizations}\label{sec:algorithmic_optimizations}
In this section, we discuss our practically optimized implementations of the LZ77 3-Approximation and the exact LZ77 algorithm.
Both build the SSS with $\tau = 512$; \Cref{apx:parameters} explains how we chose this and all following parameters.

\subsection{3-Approximation}\label{sec:3_apprx_opt}
We first run Phase 1 and obtain all LPF phrases and the gap positions (see \Cref{sec:lpf_impl}).
We then process the gaps in a second left-to-right pass using the rolling hash index described in \Cref{sec:hash_index}.

\begin{figure}[t]
      \centering
      \includegraphics[width=0.9\linewidth]{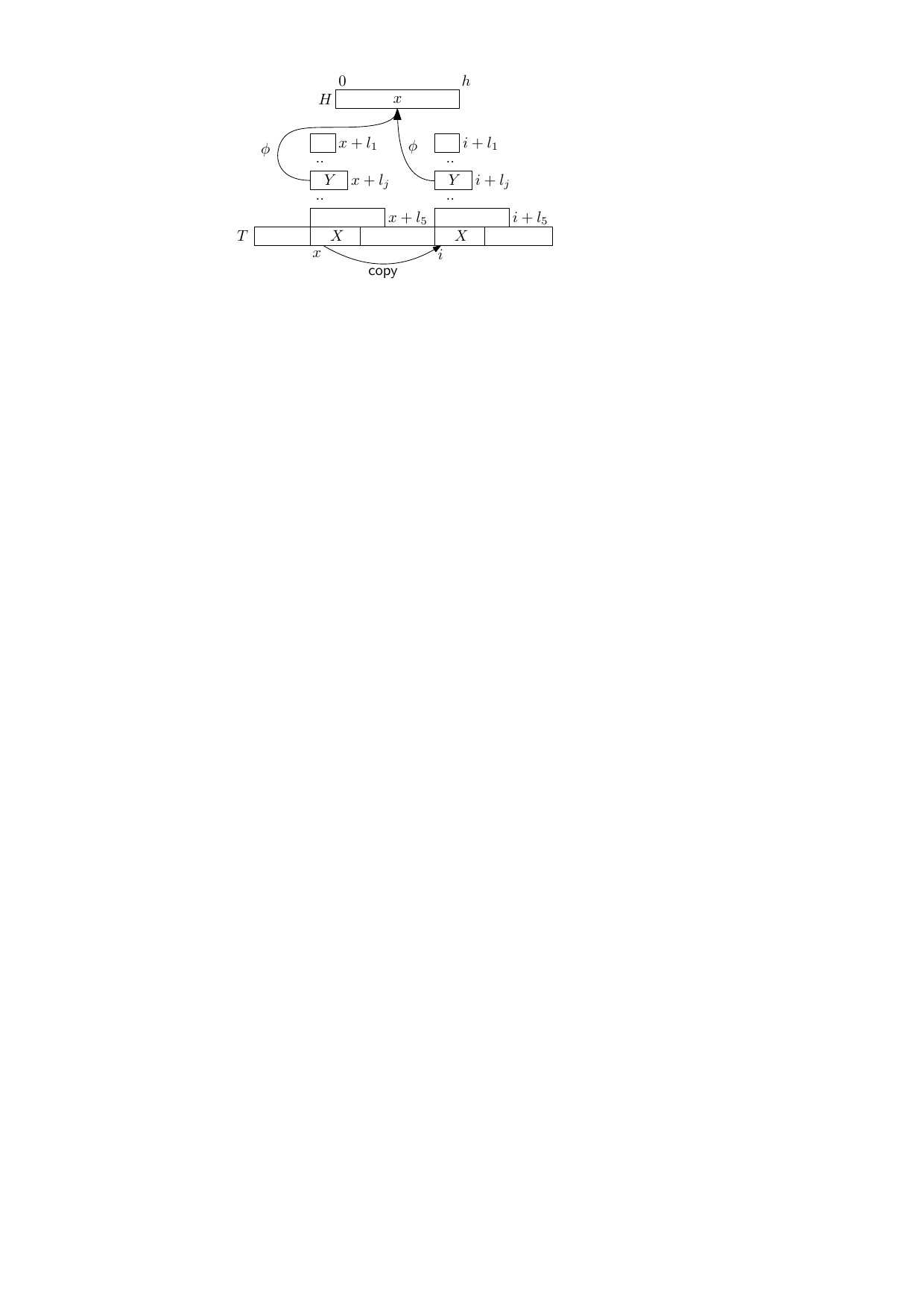}
      \caption{Illustration of how we can find an LZ phrase $X$ using the rolling hash index.}
      \label{fig:rolling_hash_index}
\end{figure}

\subsubsection{LPF-Phrase Computation}\label{sec:lpf_impl}
Here, we employ two simple practical optimizations.
First, we extend an LPF phrase to the left if it is preceded by a gap.
Second, we use the following observation to skip samples in the SSS that are covered by LPF phrases.

\begin{observation}\label{obs:skip}
      Let $f_x$ be a perfect phrase from \Cref{def:gapped_lz} starting at sample $p_x$ with source $s_x$.
      Let $p_j \in S$ be maximal such that $p_x < p_j < p_x + |f_x|$, and let $l_j = \LCE(p_j, \LPF_S[j])$.
      Then any sample $p_k \in S$ with $p_x < p_k < p_j$ yields a sparse LPF phrase ending before $p_j + l_j$.
\end{observation}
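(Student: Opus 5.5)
The plan is a shifting argument by contradiction: a sparse LPF phrase at $p_k$ that reached too far would, shifted to $p_j$, give $p_j$ a longer previous match than $l_j$. Fix a sample $p_k$ with $p_x<p_k<p_j$, and let $L_k=\LCE(p_k,\LPF_S[k])$ with source $s_k=\LPF_S[k]$. By definition $s_k$ is a sample smaller than $p_k$, and $T[s_k,s_k+L_k)=T[p_k,p_k+L_k)$. Suppose, for contradiction, that $p_k+L_k>p_j+l_j$, i.e.\ the sparse LPF phrase at $p_k$ ends after $p_j+l_j$.

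First I shift the occurrence. Since $p_k<p_j<p_j+l_j<p_k+L_k$, the offset $d=p_j-p_k$ is smaller than $L_k$. Set $s'=s_k+d$. Then $T[s',s'+L_k-d)=T[p_j,p_k+L_k)$, and $s'<p_k+d=p_j$. So $s'$ is an earlier occurrence whose match with $p_j$ has length $L_k-d=(p_k+L_k)-p_j>l_j$.

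Next I show that $s'$ is a valid candidate for $\LPF_S[j]$, which requires $s'\in S$. Here I use the synchronizing condition of \Cref{def:sss}. If $L_k-d\ge 2\tau$, then $T[s',s'+2\tau)=T[p_j,p_j+2\tau)$. Since $p_j\in S$, this gives $s'\in S$, and $s'$ is an earlier sample than $p_j$. The maximality in the definition of $\LPF_S[j]$ then yields $l_j\ge \LCE(s',p_j)\ge L_k-d>l_j$, a contradiction. The role of $f_x$ and the maximality of $p_j$ is only to place us inside the phrase: they guarantee that the relevant samples $p_k$ lie strictly between $p_x$ and $p_j$, and that all of them are covered by $f_x$. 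Indeed, $T[p_j,p_x+|f_x|)$ has the earlier occurrence at $s_x+(p_j-p_x)$.

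The main obstacle is the short case $L_k-d<2\tau$. There the synchronizing condition does not force $s'$ to be a sample, because sparse LPF sources are restricted to samples. I would handle this exactly as Phase~1 does. The observation is applied in the case-1 regime, where sparse LPF values are computed via $\PSV_S/\NSV_S$ and only matches of length at least $2\tau$ are meaningful. In the case-2 regime, phrases shorter than $2\tau$ are handled by the lookup table and the gap processing anyway. So I would state the argument for phrases reaching at least $2\tau$ beyond $p_j$. In the remaining situation, $p_k+L_k<p_j+2\tau$, and I would bound the end by $p_j+\max(l_j,2\tau)$, which is what the skipping rule actually needs. If the intended statement is the literal one, I would check whether this residual case is ruled out by the case-1 assumption on $p_j$, namely that $T[p_j,p_j+2\tau)$ has a previous occurrence which, by synchronization, starts at a sample. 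That assumption gives $l_j\ge 2\tau$, and then $L_k-d>l_j\ge2\tau$ always holds, so the main argument already applies.
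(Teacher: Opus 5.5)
Your argument for the case $L_k-d\ge 2\tau$ is exactly what the paper's one-sentence justification amounts to. That justification says the observation ``immediately follows from the synchronizing condition of the SSS and the definition of the sparse LPF''. The steps are: shift the source of the phrase at $p_k$ by $d=p_j-p_k$, use synchronization to see that $s_k+d$ is an earlier sample, and contradict the maximality of $l_j$. You are also right that the hypotheses on $f_x$ are never used.

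The gap is the case $L_k-d<2\tau$. Your fallback assumes $p_j$ is a case-1 sample, so that $l_j\ge 2\tau$. That assumption is not part of the statement and does not follow from it:
\begin{itemize}
\item $p_j$ lying inside $f_x$ only gives an earlier occurrence of $T[p_j,p_x+|f_x|)$. This string is typically shorter than $\tau$, because $p_j$ is the last sample in $f_x$.
\item The implementation explicitly computes LPF phrases also at samples whose length-$2\tau$ context has no earlier occurrence.
\end{itemize}

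No argument from the synchronizing condition and the definition of the sparse LPF can handle this case, because the observation fails there for a valid synchronizing set. Take $\tau=2$, $T=\mathtt{bcdyweabcdzfabcdywghij}$ ($n=22$) and $S=[1,19]\setminus\{3\}$. This set is $2$-synchronizing: the context $T[3,7)=\mathtt{dywe}$ occurs nowhere else, only the isolated position $3$ is missing, and the periodicity clause is vacuous since $\tau/3<1$.
\begin{itemize}
\item The sample $p_x=13$ starts the perfect phrase $f_x=T[13,17)=\mathtt{abcd}$, with source $7$; it is perfect because $\mathtt{abcdy}$ has no earlier occurrence.
\item Hence $p_j=16$. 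The only earlier sample starting with $\mathtt{d}$ is $10$, so $l_j=1$.
\item But $p_k=14$ has sparse LPF source $1$ with $L_k=5$. Its phrase $T[14,19)$ ends two positions past $p_j+l_j=17$.
\end{itemize}
Here $s_k+d=3\notin S$ and $L_k-d=3<2\tau$, which is exactly your short case. Also $p_j$ is a case-2 sample, since $\mathtt{dywg}$ is new.

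So the paper's proof silently skips the same case you flagged. The versions that can actually be proved are the ones your argument already gives:
\begin{itemize}
\item the stated bound under the extra hypothesis $l_j\ge 2\tau$ (e.g.\ when $T[p_j,p_j+2\tau)$ has an earlier occurrence), or
\item unconditionally, the weaker bound $p_k+L_k\le p_j+\max(l_j,2\tau-1)$.
\end{itemize}
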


This immediately follows from the synchronizing condition of the SSS and the definition of the sparse LPF.
Hence, we can skip each such $p_k$, as it contributes nothing beyond the LPF from $p_j$.

\subsubsection{Gap Factorization}\label{sec:hash_index}
For each gap region we only perform Phase 3, because phase 2 (closing long, periodic gaps) did neither change the running time nor the approximation ratio in practice.

The pattern lookup table from \cite[Lemma 3]{lz77_sublinear} requires patterns of length $l \leq \lfloor\log_2 n/((2+\varepsilon)\lceil\log_2\sigma\rceil)\rfloor$.
For practical values such as $\sigma = 256$ and $n = 2^{32}$, this gives $l \leq 2$, making it useless.
Thus, we replace it with a \emph{rolling hash index} $H$.

$H$ is a flat array of $h$ entries, each storing one text position.
Because we will use fingerprints of strings to index $H$ (which requires computing $\varphi \bmod h$), we want $h$ to be a power of 2, because then, $\bmod h$ is equivalent to $\& (h-1)$, where $\&$ denotes ``bitwise and''.
We therefore give $H$ a memory budget of $\MH = \min(2^{30}, \max(2^{20}, n/10, g/3, \Mpeak - \Mcur))$ bytes and set $h$ to the power of two that results in the size of $H$ being closest to $\MH$.
Here, $g$ is the total length of the gaps, and $\Mpeak - \Mcur$ is the memory that the preceding phases have already released below their own peak, which $H$ can occupy without increasing the peak memory consumption.

We select a set $L = \{l_1, l_2, l_3, l_4, l_5\}$ of pattern lengths based on a gap phrase length guess $\rho = \min(\bar{g}, \bar{l}, 2^{10-7g/n})$, where $\bar{g}$ is the mean gap length and $\bar{l}$ is the mean LPF phrase length.
\Cref{tab:lengths} gives the mapping from $\rho$ to $L$.

\begin{table}[t]
\centering
\setlength{\tabcolsep}{4pt}
\renewcommand{\arraystretch}{1.0}
\begin{tabular}{cc|cc}
\hline
$\rho$ range & $L$ & $\rho$ range & $L$ \\
\hline
$(0,6]$   & $\{2,3,4,5,6\}$   & $(32,64]$      & $\{2,4,7,12,28\}$ \\
$(6,8]$   & $\{2,3,4,6,8\}$   & $(64,128]$     & $\{2,4,8,16,36\}$ \\
$(8,12]$  & $\{2,3,4,8,12\}$  & $(128,256]$    & $\{2,5,10,20,42\}$ \\
$(12,16]$ & $\{2,4,6,9,16\}$  & $(256,1024]$   & $\{2,6,12,24,48\}$ \\
$(16,32]$ & $\{2,4,6,10,20\}$ & $(1024,\infty)$ & $\{2,8,16,32,64\}$ \\
\hline
\end{tabular}
\caption{The set $L$ of pattern lengths maintained by the rolling hash index, as a function of $\rho$; the table continues in the right half.}
\label{tab:lengths}
\end{table}

During the left-to-right scan at position $p$, for each length $l_j \in L$ we maintain the Karp-Rabin rolling fingerprint $\varphi(p, p+l_j-1)$.
As we do not store any fingerprints, space is not a limitation, so we can afford the large mersenne prime $q = 2^{107}-1$ with 128-bit registers.
At a position $p$ inside a gap, we compute for each $l_j \in L$ an index $i = \varphi(p, p+l_j-1) \bmod h$ in $H$ and retrieve a source position $s_j = H[i]$.
Then, we use the source maximizing $\LCE(s_j, p)$ for the next phrase.
Finally, we write $H[i] \leftarrow p$.

When trying to find a phrase starting at $T[i]$, we possibly find a previous occurrence $x$ of some $Y = T[i, i + l_j)$ in $H$ at position $\varphi(i, i + l_j - 1)$, because when we last rolled over $Y$, we set $H[\varphi(i, i + l_j - 1)] \gets x$ (see \Cref{fig:rolling_hash_index}).
Finally, we extend the match further if possible to obtain an even longer phrase $X$.

We employ further optimizations:
We do not roll over LPF phrases (except for the first position in each phrase), because this only increases running time and does not improve compression.

Note that in contrast to the theoretical description of the 3-approximation, we compute an LPF phrase at a sample $p_i$ even if $T[p_i, p_i + 2\tau)$ has no previous occurrence.
The resulting phrase may thus not be perfect.
Therefore, we allow a gap phrase to extend past the following LPF phrase, which improves compression in practice.

\subsection{Exact LZ77 Algorithm}\label{sec:exact_opt}
In this section, we discuss our optimizations for the exact LZ77 algorithm.
\cite{master_thesis} describes this implementation, including the exact algorithm with sparse interval sampling (\Cref{sec:sampling}), in full detail.

\subsubsection{Sample-Set Construction}\label{sec:build_C}
In practice, we build $C$ (see \Cref{lem:sample_set}) from the approximate factorization in a single left-to-right pass.
We set $\delta = \min(\lfloor n/z'' \rfloor, 256)$, where $z''$ is the number of phrases of the approximate factorization.
Rather than adding all multiples of $\delta$ globally, we only insert extra samples between consecutive phrase endpoints in $E$ when their distance exceeds $\delta$.
This reduces the size of $C$ by $\approx 30\%$ compared to the generic construction.

\subsubsection{Orthogonal Range One-Reporting}\label{sec:oror}
Recall that during the computation of a perfect factor starting at position $i$, we use two methods (see \Cref{thm:exact_lz77}).
Method 1 considers close sources $s \in [i - \delta, i)$.
Method 2 finds far sources $s < i - \delta$, which involves \ioOroR queries (see \Cref{def:io_oror}).

If the sequence of inserted points is fixed in an instance of \ioOroR, we call it an \ioOroR instance with \emph{fixed insertion order}.
In the exact LZ77 algorithm, the sample set $C$ sets $N = |C|$ and $\pi(\IPA_C[i]) = \ISA_C[i]$, and points are inserted in the order $(\IPA_C[1],\ISA_C[1]), \ldots , (\IPA_C[|C|],\ISA_C[|C|])$ as the left-to-right scan reveals new samples, so the insertion order is fixed.
In this special case, we can reduce \ioOroR to a static variant of the problem:

\begin{definition}\label{def:sw_oror}
      Let $\pi$ be a permutation of $[1,N]$.
      The \emph{static weighted orthogonal range one-reporting} (\swOroR) problem is, given a set $P = \{(i,\pi(i)) \mid i \in [1,N]\}$ of $N$ points and a weight function $w$, to answer \emph{Query}$(Q, v)$ with $Q = [x_1,x_2] \times [y_1,y_2]$:
      return any $p \in Q \cap P$ with $w(p) \leq v$, or report that no such $p$ exists.
\end{definition}

\begin{remark}\label{rem:reduction}
      Any \ioOroR instance with fixed insertion order $(i_1,\pi(i_1)), \ldots ,(i_M,\pi(i_M))$ can be reduced to \swOroR by setting $w(i_j, \pi(i_j)) = j$ for $j \in [1,M]$.
      A \emph{Query}$(Q)$ with $Q = [x_1,x_2] \times [y_1,y_2]$ to the \ioOroR instance after $v$ insertions becomes \emph{Query}$(Q, v)$ to \swOroR.
\end{remark}

Hence, we can instead use a data structure for \swOroR in the exact LZ77 algorithm.
Furthermore, this enables us to achieve two things:
First, we can omit method 1 (close sources) in the proof of \Cref{thm:exact_lz77} by dynamically adjusting $v$ for each $k$ such that $C[v] = \max (C \cap [1, i+k))$ during method 2 (far sources).
Second, being able to arbitrarily choose the query weight enables us to parallelize the exact LZ77 algorithm by letting each thread factorize its own part of $T$.

\paragraph{Practical Data Structure}\label{sec:grid}
We implement \swOroR using a simple grid data structure.
We divide the overall point range $[1,N]^2$ into a $G \times G$ grid of cells with side length $D = 2^{14}$, i.e.,\ $G = \lceil N / D \rceil$.
Each point $p = (x,y) \in P$ belongs to cell $c(p) = (\lceil x/D \rceil - 1)\, G + \lceil y/D \rceil$; since $A$ and $I$ are $1$-based, we round both coordinates up so that this \emph{cell index} is an integer in $[1, G^2]$.
We store all points in an array $A[1..N]$ sorted by their cell indexes, with ties broken by weight.
Furthermore, we store the array $I[1..G^2]$, where $I[j] = |\{p \in P \mid c(p) < j\}| + 1$ is the beginning of the range in $A$ of the points with cell index $j$.
$I$ can thus be used during a query to find all points $A[I[i], I[i+1])$ in cell $i$.
Overall, our data structure occupies $\Oh(N + (N/D)^2)$ words.

Answering a query involves iterating over all cells cut by the query rectangle $Q = [x_1,x_2] \times [y_1,y_2]$.
First, we check the cells fully contained in $Q$.
If the first point in one of those cells is heavier than the query weight, we can skip this cell, because the points in each cell are sorted by their weight.
Then, we check the boundary cells, which are not fully contained in $Q$.
Here we have to check the weight and the position of each point.

$Q$ fully contains $|Q| / D^2$ cells.
If we find a point in one such cell, we can either directly return it or skip the cell, hence we spend $\Oh(|Q| / D^2)$ time for the fully contained cells.
$Q$ cuts $< 4G = \Oh(N / D)$ boundary cells.
For each point in a boundary cell, we have to check whether it lies in $Q$.
However, since the points have distinct $x$- and $y$ coordinates, there are $< 4D$ such points in total, hence we need overall $\Oh(N / D + D)$ time for the boundary cells.
In total, we need $\Oh(|Q| / D^2 + N / D + D)$ time.

\paragraph{Character Decomposition.}\label{sec:decomp}
The query rectangles in our exact algorithm are constrained by the character at the sample position, which allows the global \swOroR instance to be decomposed into $\sigma$ independent sub-instances with smaller overall point range (see \Cref{thm:decomp}).

\begin{figure}[t]
      \centering
      \includegraphics[width=0.9\linewidth]{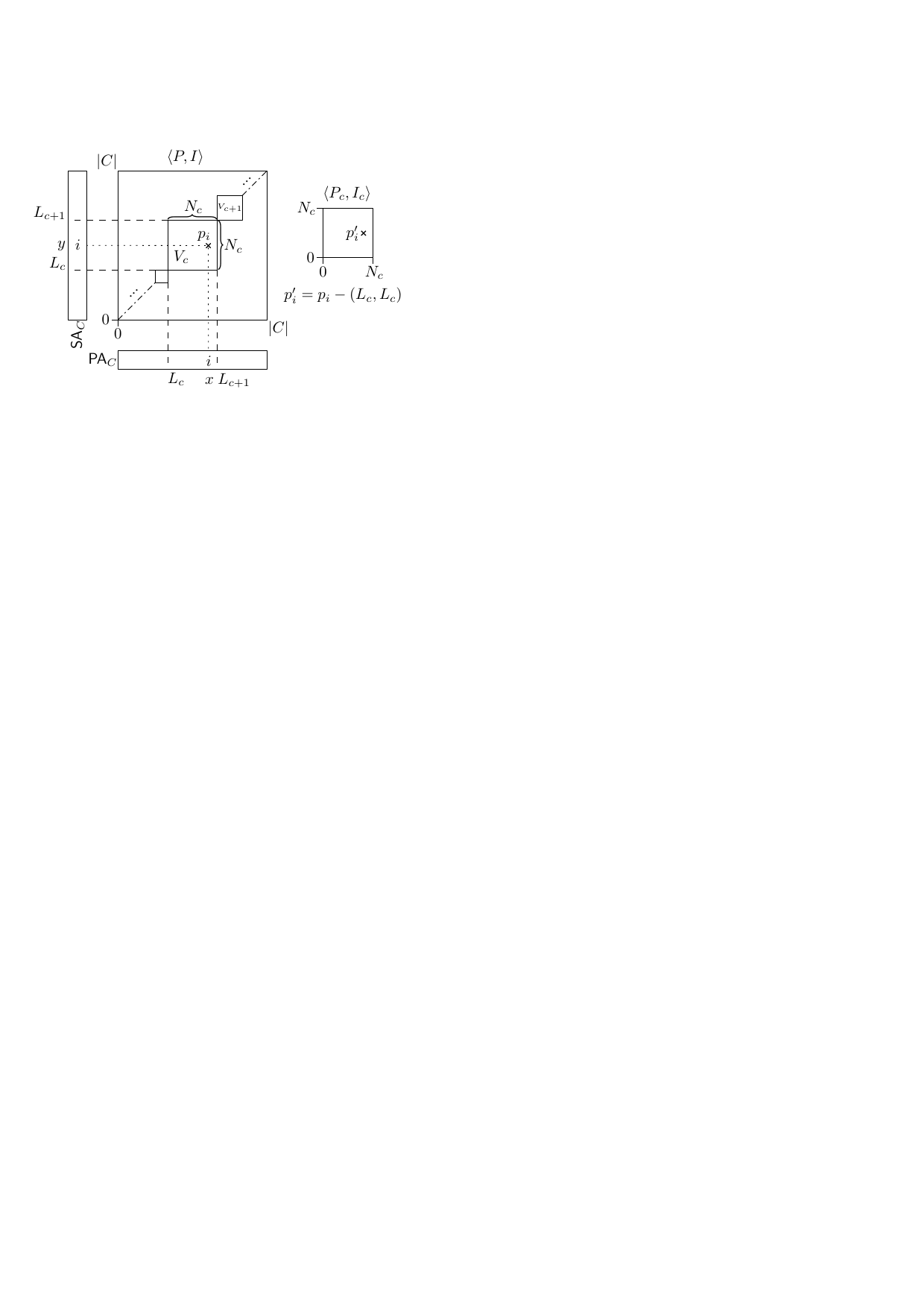}
      \caption{Illustration of the character decomposition described in \Cref{thm:decomp}.}
      \label{fig:decomposed_range}
\end{figure}

\begin{theorem}\label{thm:decomp}
      For $c \in [1, \sigma]$, define
      $C_c = \{p \in C \mid T[p] = c\}$,
      $L_c = \sum_{i=1}^{c-1}|C_i|$,
      $V_c = (L_c, L_{c+1}]^2$ and
      $P_c = P \cap V_c$ with $L_{\sigma + 1} = |C|$.
      Then, during the LZ77 exact algorithm (\Cref{thm:exact_lz77}), each query rectangle $Q = [x_1,x_2] \times [y_1,y_2]$ with $[x_1,x_2] = \piv_C(T[i, m])$ and $[y_1,y_2] = \siv_C(T[m, i + l'))$ is contained in $V_{T[m]}$.
      Therefore, the global \ioOroR instance $P$ can be reduced to the $\sigma$ independent \ioOroR instances $P_1, \ldots , P_\sigma$ with fixed insertion order.
\end{theorem}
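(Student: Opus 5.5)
The plan is to show that both sorted orders of $C$ group the samples into blocks by their own character $T[p]$, and that both intervals of a query are pinned to the block of the split character $T[m]$. Everything then follows from the ranks $(L_c, L_{c+1}]$ being the same in both coordinates.

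\textbf{Step 1: the sorted orders are blocked by character.} A sample $p \in C$ contributes the point $(\IPA_C[\cdot], \ISA_C[\cdot])$.
\begin{itemize}[nosep]
\item In $\SA_C$, suffixes $T_p$ are sorted lexicographically, so the primary key is the first character $T[p]$. The samples with $T[p] = c$ therefore occupy exactly the ranks $(L_c, L_{c+1}]$, since $L_c$ counts the samples with a smaller character.
\item In $\PA_C$, prefixes $T[1,p]$ are sorted co-lexicographically, i.e., by their reverses. The primary key is the last character, which is again $T[p]$. So the same samples occupy the same rank range $(L_c, L_{c+1}]$ in this coordinate as well.
\end{itemize}
Hence every point of a sample with character $c$ lies in $V_c$. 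The squares $V_1,\dots,V_\sigma$ are pairwise disjoint, so $P$ is the disjoint union of $P_1,\dots,P_\sigma$, with $|P_c| = |C_c|$.

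\textbf{Step 2: every query rectangle lies in $V_{T[m]}$.} Consider a query from method~2 with head length $k \ge 1$, so $m = i+k-1 \in [i, i+l')$.
\begin{itemize}[nosep]
\item The head $T[i,m]$ is nonempty and ends with $T[m]$. A sample $q$ is in $\piv_C(T[i,m])$ only if $T[1,C[q]]$ has suffix $T[i,m]$, which forces $T[C[q]] = T[m]$. So $[x_1,x_2] \subseteq (L_{T[m]}, L_{T[m]+1}]$.
\item The tail $T[m, i+l')$ is nonempty because $m < i+l'$, and it starts with $T[m]$. Every suffix in $\siv_C$ of it therefore starts with $T[m]$, so $[y_1,y_2] \subseteq (L_{T[m]}, L_{T[m]+1}]$.
\end{itemize}
If either interval is empty, the query trivially reports emptiness and can be answered by any sub-instance. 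Hence $Q \subseteq V_{T[m]}$.

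\textbf{Step 3: the reduction.} Each global insertion of a sample's point is forwarded to the unique sub-instance $P_c$ containing it. Each query $Q$ goes to $P_{T[m]}$, which is correct because $Q \cap R = Q \cap (R \cap P_{T[m]})$ by Step~2. The insertion order inside each $P_c$ is the global left-to-right order restricted to $C_c$, so it is fixed. Via \Cref{rem:reduction}, each sub-instance can keep the global insertion index as its weight, so the query threshold $v$ carries over unchanged.

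The main obstacle is the boundary cases: making sure that both the head and the tail are nonempty, which requires $k \ge 1$ and $l' \ge k$ in the exponential search, and that empty intervals are harmless. The blocking in Step~1 follows directly once colex order is read as lex order of reverses.
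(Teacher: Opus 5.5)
Your proof is correct and follows the same approach as the paper: insertions go to $P_{T[C[i]]}$, queries go to $P_{T[m]}$, and weights pass through unchanged. You also supply the justification the paper only asserts, namely that both $\SA_C$ and the co-lexicographic $\PA_C$ have $T[p]$ as primary key, and that the nonempty head ends and the nonempty tail starts with $T[m]$; the paper's explicit shift of coordinates by $(L_c, L_c)$, which you leave implicit, is already justified by your observation $|P_c| = |C_c|$.
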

\begin{proof}
Clearly, $P_1, \ldots , P_\sigma$ are valid instances of \ioOroR.
We simulate $P$ using $P_1, \ldots , P_\sigma$ as follows:
Each Insert$(p)$ with $p = (\IPA_C[i], \ISA_C[i]) = (x, y)$ is delegated to instance $P_c$ with $c = T[C[i]]$ as Insert$(p')$ with $p' = (x, y) - (L_c, L_c)$, because this implies $p \in P_c$.
Similarly, each query$(Q, w')$ with $Q = [x_1,x_2] \times [y_1,y_2]$, where $[x_1,x_2] = \piv_C(T[i, m])$ and $[y_1,y_2] = \siv_C(T[m, i + l'))$, is delegated to instance $P_c$ with $c = T[m]$, query rectangle $Q' = [x_1',x_2'] \times [y_1', y_2'] = [x_1 - L_c, x_2 - L_c] \times [y_1 - L_c, y_2 - L_c]$ and weight $w'$.
A returned point $p' = (x', y') \in P_c$ is translated back to its corresponding point $p = (x', y') + (L_c, L_c) \in P$.
\end{proof}

\Cref{fig:decomposed_range} illustrates the decomposition.
Assuming a uniform character distribution, i.e.\ $|C_c| = |C|/\sigma$ for all $c \in \Sigma$, the total point range of the $\sigma$ sub-instances is
\[
  \sum_{c \in \Sigma} |C_c|^2
  = \sigma \left(\frac{|C|}{\sigma}\right)^{\!2}
  = \frac{|C|^2}{\sigma}
\]
instead of $|C|^2$, reducing the effective grid area by factor $\sigma$.
This enables us to either decrease the cell width $D$ without increasing the size of the data structure, or reduce the size of the data structure while leaving $D$ unchanged.

\paragraph{Small-Range Optimization.}
For query rectangles that are narrow in at least one dimension, i.e.,\ $\min(x_2-x_1, y_2-y_1) \leq \gamma$ for a threshold $\gamma = 4096$, we bypass the query to $R$ and answer the query by a linear scan over two precomputed arrays $\Pi[1..|C|]$ and $\Psi[1..|C|]$, where $\Pi[x] = \ISA_C[\PA_C[x]]$ and $\Psi[y] = \IPA_C[\SA_C[y]]$.
Conceptually, $\Pi[x]$ maps a prefix-rank $x$ to the suffix-rank of the same sample.
Symmetrically, $\Psi[y]$ maps a suffix-rank $y$ to the prefix-rank of the same sample.
To answer a query $\PA_C[x_1,x_2] \cap \SA_C[y_1,y_2] \cap [1,v] \neq \emptyset$, we scan whichever of $[x_1,x_2]$ and $[y_1,y_2]$ is smaller: scanning $x \in [x_1,x_2]$ and checking $\Pi[x] \in [y_1,y_2]$ and $\PA_C[x] \leq v$, or scanning $y \in [y_1,y_2]$ and checking $\Psi[y] \in [x_1,x_2]$ and $\SA_C[y] \leq v$.
This causes only one random memory access and is thus faster than the grid data structure from \Cref{sec:grid} for short ranges.

\subsubsection{Sparse Prefix- and Suffix-Array Interval Sampling}\label{sec:sampling}
The bottleneck in the exact algorithm is the binary searches for sparse suffix- and prefix intervals w.r.t.\ $C$, each requiring $\Oh(\log|C|)$ LCE queries.
We reduce this cost by sampling all intervals for all patterns of specific lengths.
More precisely, we choose a set $\Lambda$ of \emph{sampled pattern lengths}, and for each $\lambda \in \Lambda$ we precompute a hash table $H(\SIV^\lambda_C)$ holding $\siv_C(T[p, p+\lambda))$ for every sample $p \in C$.
We use the fact that in the exact algorithm, each queried $P$ is a substring $T[i, i + l)$ of $T$, so its fingerprint is at hand.
$H(\SIV^\lambda_C)$ stores these intervals, hashing $[b,e]$ to $\varphi(p_b, p_b + \lambda - 1)$ with $p_b = C[\SA_C[b]]$, i.e.,\ to the fingerprint of the substring that $[b,e]$ itself identifies.
A lookup of $\siv_C(T[i, i+\lambda))$ thus scans the bucket of $\varphi(i, i+\lambda-1)$ and returns the first $[b,e]$ in it with $\LCE(p_b, i) \geq \lambda$, or $\emptyset$ if none exists: one $\Oh(1)$-time LCE query acts as the \emph{equality check}.
As the expected bucket size is $\Oh(1)$, the additional time to compute $\siv_C(P)$ is $\Oh(1)$ in expectation instead of $\Oh(\log|C|)$.
For \emph{unsampled} lengths $\lambda' \notin \Lambda$, the time to find $\siv_C(T[i, i + \lambda'))$ is not reduced to $\Oh(1)$.
However, the stored intervals at the two adjacent sampled lengths serve as tighter initial brackets for the binary search (see \Cref{rem:interpolate}).

\begin{remark}\label{rem:interpolate}
      Let $l_1,l_2,l_3$ be pattern lengths with $1 \leq l_1 \leq l_2 \leq l_3$.
      Let $[b_1, e_1] = \siv_C(T[i, i + l_1))$, $[b_2, e_2] = \siv_C(T[i, i + l_2))$ and $[b_3, e_3] = \siv_C(T[i, i + l_3))$.
Then, $b_1 \leq b_2 \leq b_3$ and $e_3 \leq e_2 \leq e_1$.
\end{remark}

If $l_1$ and $l_3$ are sampled, but not $l_2$, we compute $[b_1, e_1]$ and $[b_3, e_3]$ in $\Oh(1)$ time via hash table lookups, and then find $b_2$ and $e_2$ via binary searches over $[b_1, b_3]$ and $[e_3, e_1]$, respectively.

\paragraph{Computing Fingerprints of Substrings.}
Recall that to compute the hash tables and to find $\siv_C(T[x, y))$, we need $\varphi(x, y)$.
To quickly find $\varphi(x, y)$ for arbitrary $x$ and $y$, we store the fingerprint of each $s$th prefix of $T$, using $s = 16$ and $q = 2^{31}-1$.
This yields the following time/space trade-off:

\begin{theorem}\label{thm:fingerprints}
Let $s \geq 1$ be an integer parameter.
Then there is a data structure with size $\Oh(n / s + \sqrt{n})$ and construction time $\Oh(n)$, s.t.\ we can compute any $\varphi(i, j)$ in $\Oh(s)$ time.
\end{theorem}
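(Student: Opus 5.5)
We store prefix fingerprints $\Phi[k] = \varphi(1, ks)$ for $k \in [0, \lfloor n/s \rfloor]$, with $\Phi[0] = 0$. We also store a table of powers of $b$ modulo $q$, split into two levels so that any $b^e \bmod q$ with $e \le n$ is obtainable in $\Oh(1)$ time from $\Oh(\sqrt{n})$ stored values. Let $r = \lceil \sqrt{n} \rceil$. We keep $B_1[t] = b^t \bmod q$ for $t \in [0, r)$ and $B_2[t] = b^{tr} \bmod q$ for $t \in [0, r]$. Then $b^e = B_2[\lfloor e/r \rfloor] \cdot B_1[e \bmod r] \bmod q$. This accounts for the $\Oh(n/s + \sqrt{n})$ space bound.

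Construction is a single left-to-right pass. We maintain the running value $\varphi(1,j)$ via the recurrence $\varphi(1,j) = (\varphi(1,j-1)\cdot b + T[j]) \bmod q$ and write it out whenever $s \mid j$. This takes $\Oh(n)$ time. The power tables are filled by repeated multiplication in $\Oh(\sqrt{n})$ time.

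A query $\varphi(i,j)$ uses the standard identity
\[
\varphi(i,j) = \bigl(\varphi(1,j) - \varphi(1,i-1)\cdot b^{j-i+1}\bigr) \bmod q,
\]
which follows directly from the definition, because the terms $T[k]$ with $k < i$ in $\varphi(1,j)$ carry exactly an extra factor $b^{j-i+1}$. It therefore suffices to compute any prefix fingerprint $\varphi(1,j)$ in $\Oh(s)$ time. Let $k = \lfloor j/s \rfloor$. Start from $\Phi[k]$ and apply the recurrence for the at most $s-1$ characters $T[ks+1..j]$, giving $\Oh(s)$ time. Do the same for $i-1$, then obtain $b^{j-i+1}$ from the two-level table in $\Oh(1)$ time and combine. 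Since $q = 2^{31}-1$, all products fit in a machine word (and in general, word size $\Theta(\log n)$ suffices), so each arithmetic step costs $\Oh(1)$.

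The only real subtlety is the power lookup. A direct table of all $b^e$ would cost $\Theta(n)$ space. Computing $b^e$ by repeated squaring would cost $\Oh(\log n)$ time, which exceeds $\Oh(s)$ when $s$ is small. The square-root decomposition resolves this and is exactly where the $\sqrt{n}$ term in the space bound comes from. Correctness of the identity is routine modular arithmetic. Reducing each character contribution modulo $q$ is fine since the definition is itself taken modulo $q$.
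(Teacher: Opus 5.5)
Your proposal is correct and follows the paper's proof essentially step for step: sampled prefix fingerprints $\varphi(1, ks)$ extended in $\Oh(s)$ time, the identity $\varphi(i,j) = \bigl(\varphi(1,j) - b^{\,j-i+1}\varphi(1,i-1)\bigr) \bmod q$, and a two-level table of $\Oh(\sqrt{n})$ powers that gives $b^{\,j-i+1} \bmod q$ in $\Oh(1)$ time. You also spell out the $\Oh(n)$ left-to-right construction pass, which the paper leaves implicit.
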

\begin{proof}
We store the array $F[0..\floor{n/s}]$ with $F[i] = \varphi(1, is)$, so that $F[0] = \varphi(1, 0) = 0$ is the fingerprint of the empty prefix.

To compute an arbitrary $\varphi(i, j)$, we first look up $\varphi(1, \floor{(i-1)/s}s) = F[\floor{(i-1)/s}]$ and $\varphi(1, \floor{j/s}s) = F[\floor{j/s}]$, and extend both to $\varphi(1, i-1)$ and $\varphi(1, j)$ in $\Oh(s)$ time.
We then compute
\[
      \varphi(i, j) = \bigl(\varphi(1, j) - b^{\,j-i+1}\,\varphi(1, i-1)\bigr) \bmod q .
\]
It remains to evaluate $b^{\,j-i+1} \bmod q$ in $\Oh(1)$ time.
To this end, we precompute two arrays $X[0..\ceil{\sqrt{n}}]$ and $Y[0..\ceil{\sqrt{n}}]$ with $X[a] = b^a \bmod q$ and $Y[a] = b^{a\ceil{\sqrt{n}}} \bmod q$.
Writing any exponent $k \in [1, n]$ as $k = a + d\ceil{\sqrt{n}}$ with $a \in [0, \ceil{\sqrt{n}})$ and $d \in [0, \ceil{\sqrt{n}}]$, we obtain $b^k \bmod q = (X[a] \cdot Y[d]) \bmod q$ in $\Oh(1)$ time after $\Oh(\sqrt{n})$ preprocessing.
\end{proof}

Our method is very similar to in-place-fingerprinting \cite{in_place_fingerprinting}, but not in-place, as we still need fast direct access to $T$.

\paragraph{Adaptive Choice of $\Lambda$.}
We choose $\Lambda = \{\lambda_1, \ldots \lambda_M\}$ with $3 \leq \lambda_1 < \cdots < \lambda_M = \lambda_{\max}$ subject to a maximum wanted pattern length $\lambda_{\max}$ and a budget $\theta = 2|C|$ on the total number of stored intervals (lengths $1$ and $2$ are handled by plain arrays of size $\sigma$ and $\sigma^2$).
We set $\lambda_{\max}$ to the mean approximate phrase length $n/z''$, capped at the largest $\LCP_C$ entry.
For a pattern length $\lambda$, let
\[
  \chi(\lambda) = \bigl|\{i \in [1,|C|] : \LCP_C[i] < \lambda\}\bigr| .
\]
Every distinct length-$\lambda$ interval of $\SA_C$ is delimited on its left by an $\LCP_C$ entry below $\lambda$, so the number of such intervals (which equals the number of entries of $H(\SIV^\lambda_C)$) satisfies $|\SIV^\lambda_C| \leq \chi(\lambda)$; the non-decreasing function $\chi(\lambda) = \max\{i : \LCP'_C[i] < \lambda\}$ gives us an upper bound on the size of $H(\SIV^\lambda_C)$ and is evaluable by binary search over the sorted $\LCP_C$, which we call $\LCP'_C$.

We want the number of stored intervals to grow linearly with the \emph{index} $j$, not with the length $\lambda_j$ (see Property (\textbf{iii}) in \Cref{thm:patt_lengths} and \Cref{fig:sampled_lengths_choice}).
This is useful because of the following observation:
If most $\LCP_C$ values lie in a small range $[x,y]$ (equivalently, if most LZ77 phrases have length in $[x,y]$) then (\textbf{iii}) forces many of the $\lambda_j$ into $[x,y]$, exactly where the searches for the head and tail lengths concentrate.

\begin{theorem}\label{thm:patt_lengths}
      Given a maximum wanted pattern length $\lambda_\mathsf{max} \geq 3$ and threshold $\theta \geq |C|$, we can choose $\Lambda$ such that
      \begin{enumerate}[label=(\roman*),noitemsep,topsep=2pt]
            \item[\textup{\textbf{(i)}}] $\sum_{j=1}^{M} |H(\SIV^{\lambda_j}_C)| \leq \theta$,
            \item[\textup{\textbf{(ii)}}] $\sum_{j=1}^{M} |H(\SIV^{\lambda_j}_C)| = \Theta(\theta)$ in expectation, and
            \item[\textup{\textbf{(iii)}}] $|H(\SIV^{\lambda_j}_C)| - \chi(3) = \Theta(j)$ in expectation.
      \end{enumerate}
\end{theorem}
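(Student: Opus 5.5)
We choose $\Lambda$ greedily, using the upper bound $\chi$ in place of the true table sizes. Since $\chi$ is non-decreasing and can be evaluated by binary search over $\LCP'_C$, we can invert it. For a target count $t$, let
\[
\chi^{-1}(t) = \max\{\lambda \in [3,\lambda_{\max}] : \chi(\lambda) \leq t\}.
\]
Set $M$ as large as the budget allows, and place targets linearly in the index:
\[
t_j = \chi(3) + j\,\Delta, \qquad \Delta = \frac{\theta - M\chi(3)}{M(M+1)/2},
\]
or a similar arithmetic progression. Then define $\lambda_j = \chi^{-1}(t_j)$, remove duplicates, and force $\lambda_M = \lambda_{\max}$. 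The point of the construction is that the sizes grow linearly in $j$, not in $\lambda_j$. By construction $\chi(\lambda_j) \leq t_j$.

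\textbf{Property (i).} This is deterministic. Every distinct interval of length $\lambda$ is delimited on the left by an $\LCP_C$ entry smaller than $\lambda$, so $|H(\SIV^{\lambda_j}_C)| \leq \chi(\lambda_j) \leq t_j$. Hence
\[
\sum_j |H(\SIV^{\lambda_j}_C)| \leq \sum_j t_j \leq \theta
\]
by the choice of $\Delta$. If $M\chi(3) > \theta$, we fall back to $M = 1$. This is valid because $\chi(\lambda) \leq |C| \leq \theta$ always holds, and this is where the assumption $\theta \geq |C|$ enters.

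\textbf{Properties (ii) and (iii).} These need the bound $\chi$ to be tight in expectation. Here I would argue that $|\SIV^\lambda_C|$ equals exactly the number of entries $\LCP_C[i] < \lambda$ for $i \geq 2$, plus one. The reason is that each such entry starts a new length-$\lambda$ block of $\SA_C$, and all suffixes of $C$ with length at least $\lambda$ form nonempty blocks; positions near the text end can be ignored. So the table size equals $\chi(\lambda)$ up to an additive $\Oh(1)$.

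The remaining gap is the rounding error of $\chi^{-1}$. The function $\chi$ is a step function, and one step can jump past several targets. In expectation, under a model where the $\LCP_C$ values are spread out (no single length carrying $\omega(\Delta)$ entries), each step is $\Oh(\Delta)$. This gives
\[
\chi(\lambda_j) - \chi(3) = t_j - \chi(3) - \Oh(\Delta) = \Theta(j\Delta),
\]
which is (iii), with constants depending on $\Delta$. Summing over $j$ gives $\Theta(\theta)$, which is (ii).

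\textbf{Main obstacle.} The hard part is precisely the probabilistic claim that the jumps of $\chi$ are $\Oh(\Delta)$ in expectation. It is also not clear what randomness the "in expectation" refers to. I would try to phrase it over a random placement of the $\LCP_C$ values, or over the random fingerprint base with respect to the hash table's load. I would also handle the forced $\lambda_M = \lambda_{\max}$ cap, which may truncate the progression.
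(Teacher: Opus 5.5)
Your construction matches the paper's in outline: targets grow linearly in the index $j$, each $\lambda_j$ comes from inverting the step function $\chi$ on the sorted $\LCP'_C$, and (i) follows from $|H(\SIV^{\lambda_j}_C)| \le \chi(\lambda_j) \le t_j$. But the deterministic part (i) has a gap exactly where you force $\lambda_M = \lambda_{\max}$. That length costs $\chi(\lambda_{\max}) = \nu$. Your budget, however, reserves only $t_M = \mu + 2(\theta - M\mu)/(M+1)$ for it, where $\mu = \chi(3)$. So the claim ``by construction $\chi(\lambda_j) \le t_j$'' is false at $j = M$ whenever $t_M < \nu$.

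You never pin down $M$: you say ``as large as the budget allows'', with a fallback only for $M\mu > \theta$. As a result, (i) can actually fail. Take $\theta = |C|$, $\nu$ close to $|C|$, $\chi(3)$ small and spread-out $\LCP_C$ values. Already $M = 2$ then stores roughly $|C|/3 + |C| > \theta$ intervals. A short calculation shows that $t_M \ge \nu$ is equivalent to $M \le (2\theta + \mu - \nu)/(\mu + \nu)$. This is precisely the paper's choice $M = \lfloor (2\theta + \mu - \nu)/(\mu + \nu) \rfloor$.

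The paper parametrizes the targets by their endpoints, $t_j = \mu + \frac{j}{M}(\nu - \mu)$. Then $t_M = \nu$ holds automatically, and $\sum_j t_j = M\frac{\mu+\nu}{2} + \frac{\nu-\mu}{2} \le \theta$. The hypothesis $\theta \ge |C| \ge \nu$ is exactly what makes this $M$ at least $1$. With that $M$ your version also works: targets above $\nu$ collapse to $\lambda_{\max}$ and are removed as duplicates.

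For (ii) and (iii) you end up where the paper does. The paper does not prove any bound on the jumps of $\chi$ either. It states two modelling assumptions, (vi) $|\SIV^{\lambda_j}_C| = \Theta(\chi(\lambda_j))$ and (v) $\chi(\lambda_j) = \mu + \Theta(\frac{j}{M}(\nu-\mu))$. From these it reads off (iii) and, by summation, (ii). ``In expectation'' means nothing beyond these assumptions, so there is no random experiment for you to identify.

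Your block-counting argument is actually stronger than the paper on this point. It shows $|\SIV^\lambda_C| = \chi(\lambda)$, up to the fewer than $\lambda$ samples whose suffix is shorter than $\lambda$; that correction is not $\Oh(1)$ in general, as you claimed. This makes (vi) essentially a fact. Only (v), your bounded-jump condition, remains a genuine assumption.
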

We prove the theorem in \Cref{apx:proof}.

\begin{figure}[t]
      \centering
      \includegraphics[width=0.85\linewidth]{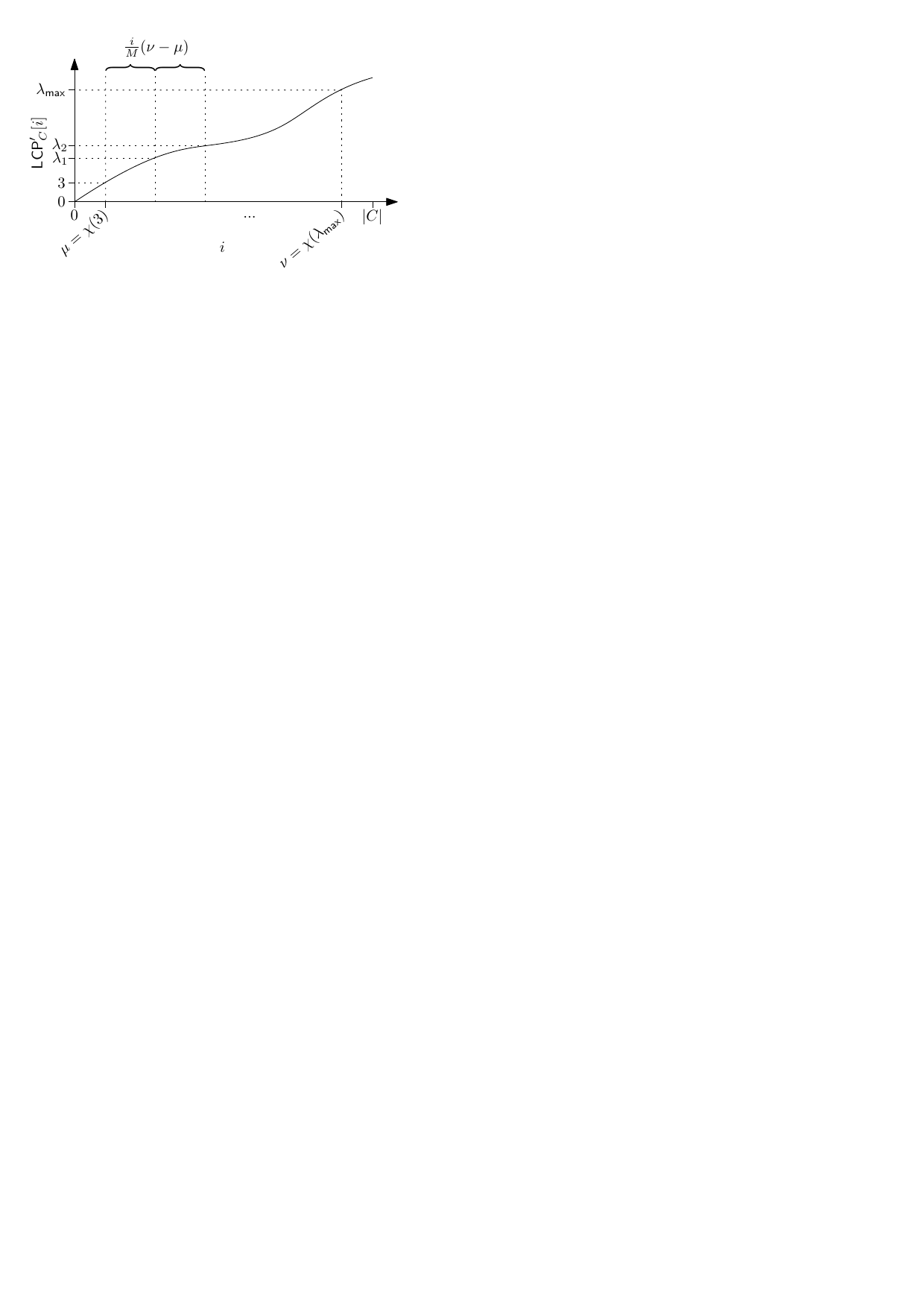}
      \caption{Illustration of the adaptive choice of the sampled pattern lengths $\Lambda$ (\Cref{thm:patt_lengths}).}
      \label{fig:sampled_lengths_choice}
\end{figure}

Note that all of this can also be applied to sparse prefix array intervals, where the equality check becomes a leftward LCE, which we compute by a word-wise backward scan, and the queried patterns are additionally bounded in length by $\delta$.
For the prefix intervals, we use $\theta = 2|C|$ and $\lambda_{\max} = \delta$.

\subsection{Parallelization}\label{sec:parallel}
The construction of the SSS is already parallel \cite{lce_sss}.
We could easily parallelize every other discussed (construction) algorithm, except for the rolling hash index from \Cref{sec:hash_index}.
In the following, $p$ denotes the number of threads.

\paragraph{Gap Factorization.}
As the state of $H$ is bound to the current position in the gaps, we cannot simply split the gap regions among the threads.
Instead, we divide $T$ into $J = \lceil g / B \rceil$ blocks $T[b_0, b_1), \ldots, T[b_{J-1}, b_J)$ with $b_0 = 1$ and $b_J = n+1$, where each block spans exactly $B = \max(4096, n / (K \cdot p))$ gap positions, the last one at most $B$, and $K = 512 > 1$ is a parameter.
We iteratively factorize $p$ consecutive blocks in parallel, running through at most $K$ iterations.

Suppose we implemented this algorithm using the rolling hash index described in \Cref{sec:hash_index}.
Let $i$ be the current position in $T$ of thread $x$, and let $j$ be the current position in $T$ of thread $y > x$.
Since $y > x$ implies $j > i$, thread $y$ writes values $> i$ into $H$, thus they cannot be used by thread $x$ to find a phrase starting at $i$.
This reduces the compression ratio especially in the leftmost blocks per iteration.

Instead of storing $H[1..h]$, we set $h \gets h/2$ and store $H_\mathsf{read}[1..h]$ and $H_\mathsf{write}[1..h]$, and read only from $H_\mathsf{read}$ and write only to $H_\mathsf{write}$.
After each iteration, we overwrite $H_\mathsf{read}$ with $H_\mathsf{write}$.

This mitigates the issue described above, but still reduces the compression ratio, especially in the rightmost blocks per iteration, because the positions stored in $H_\mathsf{read}$ have a greater distance to those blocks.

Since we run through at most $K$ iterations, overwriting $H_\mathsf{read}$ with $H_\mathsf{write}$ takes overall $\Theta(K \cdot h / p) = \Oh(K \cdot n / p)$ time, because $h = \Oh(n)$.
Compared with the sequential algorithm, this parallel algorithm reduces the compression ratio by a factor up to 2, especially if $0.1 < g / n < 0.5$.
We therefore fall back to the sequential algorithm whenever $g/n \leq 0.2$, where the gap factorization accounts for only a small fraction of the running time.

\paragraph{Exact algorithm.}
All data structures are built in parallel.
The factorization itself is parallelized by splitting $T$ into $16p$ contiguous regions such that each region contains the same number of approximate phrase end positions, which we then distribute over the $p$ threads dynamically, one region at a time, so that a thread that finishes its region early continues with the next one.
Then, each thread factorizes its region with perfect phrases, of which only the last phrase can be non-perfect, hence the resulting factorization has at most $z + 16p$ phrases.

\section{Experimental Evaluation}\label{sec:experiments}

We implemented the algorithms described in \Cref{sec:algorithmic_optimizations} in \texttt{C++20}.
We measured peak memory consumption with \texttt{malloc\_count} for the factorization experiments.
For the compression and decompression experiments, where we invoke external binaries, we report the peak resident set size.
For sorting, we used the in-place sample sort implementation \texttt{ips4o} \cite{ips4o}.
Links to all software used can be found in our GitHub repository (\url{https://github.com/LukasNalbach/lz77-sss}).
The machine has two AMD EPYC 7452 CPUs ($2 \times 32$ cores at $2.35$--$3.35$\,GHz) and $1$\,TB of DDR4 RAM; we compiled with GCC 13.3.0 and \texttt{"-march=native -DNDEBUG -Ofast"} on Ubuntu 24.04.3.
All runs were pinned to the $32$ physical cores of a \emph{single} CPU, so that no measurement crosses the socket boundary and the reported scaling from $1$ to $32$ threads is free of NUMA effects.
None of the parameters of our implementation is tuned per text: we fixed each value once on a tuning corpus of eight smaller texts \cite{master_thesis} with $g/n$ from $0.0003$ to $0.70$ and used it unchanged for all measurements (\Cref{apx:parameters}).
\Cref{apx:numbers} shows the main throughput and memory numbers underlying \Cref{fig:lz,fig:compress-decompress}.

\Cref{tab:texts} shows the tested texts.
dewiki is a highly repetitive text that has been handcrafted from German Wikipedia entries.
chr19 consists of concatenated human chromosome 19 haplotypes, and sars2 is a collection of Sars-Cov-2 genomes, both of which were crafted out of datasets from the National Center for Biotechnology Information (NCBI) database.
Links to all texts can be found in our GitHub repository.

\begin{table}[t]
\centering
\begin{tabular}{l|r|r|r|r}
\hline
text & text size & $\sigma$ & $g/n$ & $n/z$ \\
\hline
sars2  & 50.00 GB & 80 & 0.0102 & 2760.2 \\
chr19  & 50.00 GB & 52 & 0.0247 & 1375.4 \\
dewiki & 50.00 GB & 209 & 0.0163 & 1618.2 \\
\hline
\end{tabular}
\caption{Statistics of the tested texts.}
\label{tab:texts}
\end{table}

\begin{figure*}[t]
\hspace*{-5pt}%
\begin{tikzpicture}[marks]
\begin{groupplot}[
    group style={group size=3 by 1, horizontal sep=22pt},
    throughput_style, xmode=log, ymode=log,
    width=140pt, height=130.5pt,
    xtick={0.01,0.1,1,10,100}, ytick={0.01,0.1,1,10,100,1000},
]

\nextgroupplot[
    title={sars2},
    xmin=0.6454, xmax=30.94, ymin=0.3735, ymax=391.9,
    ylabel={throughput [MB/s]},
    xlabel={peak memory consumption [bytes/$n$]},
]

\addplot[3_aprx] coordinates { (1.17445,30.9947) (1.17444,196.409) };

\addplot[exact] coordinates { (1.31193,0.745169) (1.31193,15.6224) };

\addplot[exact_smpl] coordinates { (1.77174,1.23195) (1.80272,31.827) };

\addplot[lpf] coordinates { (17,2.54177) (17.0004,14.4456) };

\addplot[kkp2] coordinates { (17,2.03275) };

\annotatenear{1.17445,30.9947, -12pt, 9pt, 1.40}
\annotatenear{1.17444,196.409, 15pt, 0pt, 1.40}
\annotatenear{1.31193,0.745169, -12pt, 9pt, 1.00}
\annotatenear{1.31193,15.6224, -15pt, 0pt, 1.00}
\annotatenear{1.77174,1.23195, 6pt, -10pt, 1.00}
\annotatenear{1.80272,31.827, 6pt, 10pt, 1.00}
\annotatenear{17,2.54177, 12pt, 9pt, 1.00}
\annotatenear{17.0004,14.4456, 6pt, 10pt, 1.00}
\annotatenear{17,2.03275, -9pt, -10pt, 1.00}

\nextgroupplot[
    title={chr19},
    xmin=0.655, xmax=30.94, ymin=0.3935, ymax=186.8,
    xlabel={peak memory consumption [bytes/$n$]},
]

\addplot[3_aprx] coordinates { (1.19183,33.3941) (1.19183,93.6366) };

\addplot[exact] coordinates { (1.37517,0.785167) (1.37658,16.0962) };

\addplot[exact_smpl] coordinates { (1.87719,2.21895) (1.87864,28.9618) };

\addplot[lpf] coordinates { (17,2.02384) (17.0004,20.5907) };

\addplot[kkp2] coordinates { (17,2.06581) };

\annotatenear{1.19183,33.3941, -12pt, 9pt, 2.02}
\annotatenear{1.19183,93.6366, 6pt, 10pt, 2.02}
\annotatenear{1.37517,0.785167, -6pt, -10pt, 1.00}
\annotatenear{1.37658,16.0962, -15pt, 2pt, 1.00}
\annotatenear{1.87719,2.21895, 0pt, -11pt, 1.00}
\annotatenear{1.87864,28.9618, 6pt, 10pt, 1.00}
\annotatenear{17,2.02384, 12pt, 9pt, 1.00}
\annotatenear{17.0004,20.5907, 6pt, 10pt, 1.00}
\annotatenear{17,2.06581, -4pt, -10pt, 1.00}

\nextgroupplot[
    title={dewiki},
    xmin=0.6485, xmax=30.94, ymin=0.7097, ymax=270.9,
    xlabel={peak memory consumption [bytes/$n$]},
]

\addplot[3_aprx] coordinates { (1.18002,43.6005) (1.18002,135.782) };

\addplot[exact] coordinates { (1.33332,1.41595) (1.3362,32.2297) };

\addplot[exact_smpl] coordinates { (1.81917,4.47431) (1.8221,58.8246) };

\addplot[lpf] coordinates { (17,2.29106) (17.0004,20.1883) };

\addplot[kkp2] coordinates { (17,2.43683) };

\annotatenear{1.18002,43.6005, -12pt, 9pt, 1.44}
\annotatenear{1.18002,135.782, 6pt, 10pt, 1.44}
\annotatenear{1.33332,1.41595, -6pt, -10pt, 1.00}
\annotatenear{1.3362,32.2297, -12pt, -9pt, 1.00}
\annotatenear{1.81917,4.47431, 1pt, -11pt, 1.00}
\annotatenear{1.8221,58.8246, 6pt, 10pt, 1.00}
\annotatenear{17,2.29106, -14pt, 5pt, 1.00}
\annotatenear{17.0004,20.1883, 6pt, 10pt, 1.00}
\annotatenear{17,2.43683, -2pt, -11pt, 1.00}

\end{groupplot}
\end{tikzpicture}

\centering
\vspace*{0.15cm}
\begin{tikzpicture}[legendmarks]
\begin{axis}[legend,legend columns=5]

\addplot[3_aprx] coordinates { (0,0) };
\addlegendentry{\texttt{sss-3aprx}};
\addplot[exact] coordinates { (0,0) };
\addlegendentry{\texttt{sss-exact}};
\addplot[exact_smpl] coordinates { (0,0) };
\addlegendentry{\texttt{sss-exact-smpl}};
\addplot[lpf] coordinates { (0,0) };
\addlegendentry{\texttt{LPF}};
\addplot[kkp2] coordinates { (0,0) };
\addlegendentry{\texttt{KKP2}};

\end{axis}
\end{tikzpicture}
\vspace*{-0.1cm}
\caption{
LZ77 factorization throughput versus peak memory consumption.
Both axes are logarithmic.
Each point is annotated with the approximation ratio $z_{\mathrm{alg}}/z$, i.e., the number of phrases produced divided by the optimum $z$.
A line connects the $1$-thread (lower) and the $32$-thread runs (upper) of the same algorithm.
\texttt{KKP2} is sequential. \Cref{tab:numbers-lz} gives the same data numerically.
}
\label{fig:lz}
\end{figure*}

\subsection{LZ Factorization Performance}
We compare our LZ77 algorithms against the following competitors from related work:
\begin{enumerate}
      \item \texttt{sss-3aprx}: Our optimized implementation of Ellert's LZ77 3-Approximation \cite{lz77_sublinear}, described in \Cref{sec:3_apprx_opt}.
      \item \texttt{sss-exact}: Our optimized implementation of Ellert's exact LZ77 Algorithm \cite{lz77_sublinear}, described in \Cref{sec:exact_opt}; without sparse interval samples.
      \item \texttt{sss-exact-smpl}: The same as \texttt{sss-exact}, but with the sparse interval samples described in \Cref{sec:sampling}.
      \item \texttt{LPF}: The $\Oh(n)$ time LZ77 LPF algorithm (\Cref{sec:prelim}), computing $\PSV$ and $\NSV$ naively on demand; we parallelize it per region of $T$ and build $\SA$ with the multithreaded \texttt{libsais}.
      \item \texttt{KKP2}: The $\Oh(n)$ time LZ77 algorithm of K\"arkk\"ainen et al.\ \cite{kkp}; it overwrites $\SA$ in place and thus stores only two length-$n$ arrays, but is inherently sequential.
\end{enumerate}

\paragraph{Results.}
\Cref{fig:lz} shows throughput versus peak memory usage versus approximation ratio.
\texttt{LPF} and \texttt{KKP2} store two integer arrays of length $n$ on top of the text, so their peak memory is $17n$ bytes on every text.
\texttt{sss-3aprx} stores only the LCE data structure, $\NSV_S$, $\PSV_S$, and the rolling hash index (\Cref{sec:hash_index}); it uses $1.17$--$1.19n$ bytes ($14\times$ less than \texttt{LPF}), while \texttt{sss-exact} and \texttt{sss-exact-smpl}, which require more data structures, still use only $1.31$--$1.38n$ and $1.80$--$1.88n$ bytes ($12$--$13\times$ and $9\times$ less).
The approximation ratio is $1.40$--$2.02\,z$, far below the guaranteed $3z$.

On one thread, \texttt{sss-3aprx} factorizes $12$--$19\times$ faster than \texttt{LPF}, whereas \texttt{sss-exact} is $1.6$--$3.4\times$ slower than \texttt{LPF}; interval sampling makes \texttt{sss-exact-smpl} $1.7$--$3.2\times$ faster than \texttt{sss-exact}.
From 1 to 32 threads \texttt{sss-3aprx} speeds up by $2.8$--$6.3\times$, the exact algorithms far more ($20.5$--$22.8\times$ for \texttt{sss-exact}, $13$--$26\times$ for \texttt{sss-exact-smpl}), and \texttt{LPF} itself by only $5.7$--$10.2\times$.
On 32 threads, \texttt{sss-3aprx} is therefore $4.5$--$13.6\times$ faster than \texttt{LPF}, and even \texttt{sss-exact-smpl} factorizes $1.4$--$2.9\times$ faster than the parallel \texttt{LPF} (\texttt{sss-exact} stays within $0.8$--$1.6\times$ of it).

\subsection{Compression Performance}\label{sec:compression}
We compare \ssszip against the following compressors:
\begin{enumerate}
      \item General purpose compressors \texttt{gzip}, \texttt{bzip2}, \texttt{lz4}, \texttt{zstd}, \texttt{xz} and \texttt{7z}, each limited to the redundancy within a bounded window, block or dictionary.
      \item \texttt{bsc}: A multi-threaded compressor that uses Lempel-Ziv preprocessing, block-wise Burrows-Wheeler transform (BWT), move-to-front encoding and quantized local frequency coding.
      \item \texttt{ssszip-bsc}: Our precompressor \ssszip, followed by \texttt{bsc}.
      Rather than closing the gaps of the 3-approximation, \ssszip keeps only its LPF phrases and passes them, encoded as variable-byte codes, together with the gap strings, to a configurable downstream compressor (here \texttt{bsc}).
      \item \texttt{alz-bsc}: Dinklage's approximate LZ77 precompressor \texttt{alz} \cite{alz}, followed by \texttt{bsc}.
      We use the sampling parameter $s = 2^6$, which \cite{alz} recommends as the best time/space trade-off.
      Like \ssszip, it computes long phrases that do not cover all of $T$ and leaves the remaining gaps to the downstream compressor, but computes them at the granularity of a prefix-free parsing \cite{pfp} of $T$.
\end{enumerate}
For \texttt{bsc}, both standalone and as the postcompressor of \texttt{ssszip-bsc} and \texttt{alz-bsc}, we use the maximum block size of 2047\,MB and encoding mode \texttt{"-e2"}.
\paragraph{Omitted LZ parsers.}
Beyond the compressors above, we do not separately measure the following LZ-like parsers, all of which are dominated by a competitor we do measure.
\begin{itemize}[noitemsep,topsep=2pt]
      \item \texttt{pfp-lz77} \cite{lz77_pfp}, \texttt{rle-lz77-o} \cite{lz77_online_rlbwt}, \texttt{relz} \cite{rlz_lz} and \texttt{topk-lz77} \cite{topk_lz77} are the four competitors of \texttt{alz} \cite{alz}, which outperforms each of them in running time on every tested input, by an order of magnitude in the case of \texttt{pfp-lz77}.
      \item \texttt{SE-KKP} \cite{KKP14em}, the semi-external variant of \texttt{KKP}, and \texttt{BGone} \cite{GotoBannai14}, which computes exact LZ77 in linear time from a single integer array, are both benchmarked against \texttt{pfp-lz77} by Hong and Boucher \cite{HongBoucher25}: \texttt{SE-KKP} is substantially slower, and \texttt{BGone} is of the same order (buying roughly half the total space at about twice the running time).
      Since \texttt{alz} is already an order of magnitude faster than \texttt{pfp-lz77}, both are far slower than \texttt{alz-bsc} and \texttt{ssszip-bsc}.
      Note also that neither is a precompressor, so a like-for-like comparison would additionally have to charge them the encoding step.
\end{itemize}
Since \ssszip and the 3-approximation it builds on achieve almost the same throughput and peak memory on repetitive inputs \cite{master_thesis}, these comparisons carry over to the factorization in \Cref{fig:lz}.

\begin{figure*}[t]
\hspace*{-5pt}%
\begin{tikzpicture}[marks]
\begin{groupplot}[
    group style={group size=3 by 1, horizontal sep=22pt},
    throughput_style, xmode=log, ymode=log,
    width=140pt, height=124.5pt,
    ytick={0.1,1,10,100,1000,10000},
]

\nextgroupplot[
    title={sars2},
    xmin=1.043e-05, xmax=7.191, ymin=2.74, ymax=4543,
    ylabel={comp. thr.},
    xlabel={},
]

\addplot[ssszip_bsc] coordinates { (1.17445,28.9986) (1.17443,174.38) };

\addplot[gzip] coordinates { (2.144e-05,9.75643) };

\addplot[bzip2] coordinates { (0.00016456,10.4455) };

\addplot[xz] coordinates { (0.00194744,5.75068) (0.0766994,173.296) };

\addplot[lz4] coordinates { (0.000144,498.719) };

\addplot[zstd] coordinates { (0.00073776,880.628) (0.00637016,2276.67) };

\addplot[7z] coordinates { (0.003874,5.4668) (0.0752062,80.6502) };

\addplot[alz_6] coordinates { (1.47504,25.164) (1.48041,91.132) };

\addplot[bsc_2047] coordinates { (3.49951,11.9709) (3.49927,332.996) };

\annotatenear{1.17445,28.9986, -12pt, -4pt, 490}
\annotatenear{1.17443,174.38, -1pt, 11pt, 490}
\annotatenear{2.144e-05,9.75643, 3pt, 10pt, 5.75}
\annotatenear{0.00016456,10.4455, 4pt, 10pt, 18.4}
\annotatenear{0.00194744,5.75068, -8pt, 11pt, 244}
\annotatenear{0.0766994,173.296, 6pt, 10pt, 208}
\annotatenear{0.000144,498.719, 2pt, 11pt, 2.98}
\annotatenear{0.00073776,880.628, 3pt, -10pt, 55.6}
\annotatenear{0.00637016,2276.67, 1pt, -13pt, 55.6}
\annotatenear{0.003874,5.4668, 13pt, 0pt, 255}
\annotatenear{0.0752062,80.6502, 7pt, -11pt, 235}
\annotatenear{1.47504,25.164, -7pt, -11pt, 583}
\annotatenear{1.48041,91.132, -13pt, 0pt, 583}
\annotatenear{3.49951,11.9709, -6pt, -10pt, 413}
\annotatenear{3.49927,332.996, 0pt, 11pt, 413}

\nextgroupplot[
    title={chr19},
    xmin=1.018e-05, xmax=9.655, ymin=0.4767, ymax=3370,
    xlabel={},
]

\addplot[ssszip_bsc] coordinates { (1.19183,28.2124) (1.19183,155.297) };

\addplot[gzip] coordinates { (2.128e-05,6.34436) };

\addplot[bzip2] coordinates { (0.00016448,12.387) };

\addplot[xz] coordinates { (0.00194712,1.05123) (0.082741,31.3711) };

\addplot[lz4] coordinates { (0.000144,355.041) };

\addplot[zstd] coordinates { (0.00084048,167.869) (0.00770056,1689.17) };

\addplot[7z] coordinates { (0.00387408,0.951095) (0.0805718,15.2577) };

\addplot[alz_6] coordinates { (1.49384,18.8311) (1.9059,39.517) };

\addplot[bsc_2047] coordinates { (4.61909,6.87248) (4.61883,179.245) };

\annotatenear{1.19183,28.2124, -11pt, -1pt, 441}
\annotatenear{1.19183,155.297, -5pt, 10pt, 441}
\annotatenear{2.128e-05,6.34436, 6pt, 10pt, 3.65}
\annotatenear{0.00016448,12.387, 6pt, 10pt, 4.08}
\annotatenear{0.00194712,1.05123, -6pt, 11pt, 4.67}
\annotatenear{0.082741,31.3711, 0pt, 11pt, 4.64}
\annotatenear{0.000144,355.041, 6pt, 10pt, 1.92}
\annotatenear{0.00084048,167.869, -6pt, -10pt, 3.53}
\annotatenear{0.00770056,1689.17, 8pt, -11pt, 3.53}
\annotatenear{0.00387408,0.951095, 14pt, -5pt, 4.65}
\annotatenear{0.0805718,15.2577, 7pt, -11pt, 4.64}
\annotatenear{1.49384,18.8311, -6pt, -10pt, 430}
\annotatenear{1.9059,39.517, -14pt, 5pt, 430}
\annotatenear{4.61909,6.87248, -6pt, -10pt, 15.7}
\annotatenear{4.61883,179.245, -2pt, 11pt, 15.7}

\nextgroupplot[
    title={dewiki},
    xmin=1.094e-05, xmax=2.912, ymin=4.478, ymax=3580,
    xlabel={},
]

\addplot[ssszip_bsc] coordinates { (1.18002,29.8432) (1.18001,133.624) };

\addplot[gzip] coordinates { (2.136e-05,23.5029) };

\addplot[bzip2] coordinates { (0.00016448,10.9694) };

\addplot[xz] coordinates { (0.0019472,10.6628) (0.0767607,320.299) };

\addplot[lz4] coordinates { (0.000144,400.306) };

\addplot[zstd] coordinates { (0.00075856,1118.37) (0.00596008,1794.16) };

\addplot[7z] coordinates { (0.00387408,8.93557) (0.0752878,158.841) };

\addplot[alz_6] coordinates { (1.48611,23.7105) (1.49136,78.171) };

\addplot[bsc_2047] coordinates { (0.984135,430.147) (0.984448,1147.33) };

\annotatenear{1.18002,29.8432, -12pt, 5pt, 490}
\annotatenear{1.18001,133.624, -13pt, 3pt, 490}
\annotatenear{2.136e-05,23.5029, 6pt, 10pt, 3.32}
\annotatenear{0.00016448,10.9694, 6pt, 10pt, 14.9}
\annotatenear{0.0019472,10.6628, -6pt, -10pt, 391}
\annotatenear{0.0767607,320.299, 6pt, 10pt, 251}
\annotatenear{0.000144,400.306, -6pt, -10pt, 2.48}
\annotatenear{0.00075856,1118.37, 2pt, 11pt, 106}
\annotatenear{0.00596008,1794.16, -2pt, -11pt, 106}
\annotatenear{0.00387408,8.93557, 13pt, 0pt, 364}
\annotatenear{0.0752878,158.841, 5pt, -12pt, 304}
\annotatenear{1.48611,23.7105, -6pt, -10pt, 515}
\annotatenear{1.49136,78.171, -13pt, -1pt, 515}
\annotatenear{0.984135,430.147, -6pt, -10pt, 461}
\annotatenear{0.984448,1147.33, 4pt, 10pt, 461}

\end{groupplot}
\end{tikzpicture}

\hspace*{-5pt}%
\begin{tikzpicture}[marks]
\begin{groupplot}[
    group style={group size=3 by 1, horizontal sep=22pt},
    throughput_style, xmode=log, ymode=log,
    width=140pt, height=100.5pt,
    ytick={0.1,1,10,100,1000,10000},
]

\nextgroupplot[
    xmin=1.038e-05, xmax=6.357, ymin=16.14, ymax=1805,
    ylabel={decomp. thr.},
    xlabel={peak memory consumption [bytes/$n$]},
]

\addplot[ssszip_bsc] coordinates { (0.0624046,253.332) };

\addplot[gzip] coordinates { (2.12e-05,191.359) };

\addplot[bzip2] coordinates { (8.264e-05,39.6943) };

\addplot[xz] coordinates { (0.00020648,580.064) };

\addplot[lz4] coordinates { (0.00014408,604.074) };

\addplot[zstd] coordinates { (8.208e-05,904.648) };

\addplot[7z] coordinates { (0.00045384,442.903) };

\addplot[alz_6] coordinates { (0.0514805,678.89) };

\addplot[bsc_2047] coordinates { (3.11346,32.2027) };

\annotatenear{0.0624046,253.332, -6pt, -10pt, 490}
\annotatenear{2.12e-05,191.359, 2pt, -11pt, 5.75}
\annotatenear{8.264e-05,39.6943, 6pt, 10pt, 18.4}
\annotatenear{0.00020648,580.064, 10pt, 11pt, 244}
\annotatenear{0.00014408,604.074, -12pt, -9pt, 2.98}
\annotatenear{8.208e-05,904.648, -5pt, 10pt, 55.6}
\annotatenear{0.00045384,442.903, 4pt, -10pt, 255}
\annotatenear{0.0514805,678.89, 6pt, 10pt, 583}
\annotatenear{3.11346,32.2027, 0pt, 11pt, 413}

\nextgroupplot[
    xmin=1.012e-05, xmax=9.383, ymin=10.09, ymax=1078,
    xlabel={peak memory consumption [bytes/$n$]},
]

\addplot[ssszip_bsc] coordinates { (0.121676,330.728) };

\addplot[gzip] coordinates { (2.112e-05,156.143) };

\addplot[bzip2] coordinates { (8.256e-05,27.3226) };

\addplot[xz] coordinates { (0.00020624,88.3222) };

\addplot[lz4] coordinates { (0.000144,540.28) };

\addplot[zstd] coordinates { (0.00010248,520.749) };

\addplot[7z] coordinates { (0.00072008,89.6062) };

\addplot[alz_6] coordinates { (0.182708,197.901) };

\addplot[bsc_2047] coordinates { (4.49449,20.1393) };

\annotatenear{0.121676,330.728, 6pt, 10pt, 441}
\annotatenear{2.112e-05,156.143, 6pt, 10pt, 3.65}
\annotatenear{8.256e-05,27.3226, 6pt, 10pt, 4.08}
\annotatenear{0.00020624,88.3222, -5pt, 10pt, 4.67}
\annotatenear{0.000144,540.28, 9pt, 10pt, 1.92}
\annotatenear{0.00010248,520.749, -8pt, 11pt, 3.53}
\annotatenear{0.00072008,89.6062, 6pt, 10pt, 4.65}
\annotatenear{0.182708,197.901, -6pt, -10pt, 430}
\annotatenear{4.49449,20.1393, -2pt, 11pt, 15.7}

\nextgroupplot[
    xmin=1.109e-05, xmax=1.841, ymin=20.87, ymax=1833,
    xlabel={peak memory consumption [bytes/$n$]},
]

\addplot[ssszip_bsc] coordinates { (0.0649919,354.853) };

\addplot[gzip] coordinates { (2.112e-05,152.502) };

\addplot[bzip2] coordinates { (8.256e-05,41.6419) };

\addplot[xz] coordinates { (0.0002064,620.969) };

\addplot[lz4] coordinates { (0.000144,616.731) };

\addplot[zstd] coordinates { (8.2e-05,918.609) };

\addplot[7z] coordinates { (0.00045344,455.774) };

\addplot[alz_6] coordinates { (0.0819926,502.185) };

\addplot[bsc_2047] coordinates { (0.967083,476.11) };

\annotatenear{0.0649919,354.853, -6pt, -10pt, 490}
\annotatenear{2.112e-05,152.502, 6pt, 10pt, 3.32}
\annotatenear{8.256e-05,41.6419, 6pt, 10pt, 14.9}
\annotatenear{0.0002064,620.969, 10pt, 9pt, 391}
\annotatenear{0.000144,616.731, -9pt, -10pt, 2.48}
\annotatenear{8.2e-05,918.609, 6pt, 10pt, 106}
\annotatenear{0.00045344,455.774, -3pt, -10pt, 364}
\annotatenear{0.0819926,502.185, 6pt, 10pt, 515}
\annotatenear{0.967083,476.11, 0pt, 11pt, 461}

\end{groupplot}
\end{tikzpicture}

\centering
\vspace*{0.15cm}
\begin{tikzpicture}[legendmarks]
\begin{axis}[legend,legend columns=10]

\addplot[ssszip_bsc] coordinates { (0,0) };
\addlegendentry{\texttt{ssszip-bsc}};
\addplot[gzip] coordinates { (0,0) };
\addlegendentry{\texttt{gzip}};
\addplot[bzip2] coordinates { (0,0) };
\addlegendentry{\texttt{bzip2}};
\addplot[xz] coordinates { (0,0) };
\addlegendentry{\texttt{xz}};
\addplot[lz4] coordinates { (0,0) };
\addlegendentry{\texttt{lz4}};
\addplot[zstd] coordinates { (0,0) };
\addlegendentry{\texttt{zstd}};
\addplot[7z] coordinates { (0,0) };
\addlegendentry{\texttt{7z}};
\addplot[alz_6] coordinates { (0,0) };
\addlegendentry{\texttt{alz-bsc}};
\addplot[bsc_2047] coordinates { (0,0) };
\addlegendentry{\texttt{bsc}};

\end{axis}
\end{tikzpicture}
\vspace*{-0.1cm}
\caption{
Compression (top row) and decompression (bottom row) throughput versus peak memory consumption. Both axes are logarithmic.
Each point is annotated with the compression ratio, i.e., the uncompressed size of the text divided by the size of the compressed file.
A line connects the $1$-thread (lower) and the $32$-thread runs (upper) of the same compressor.
\Cref{tab:numbers-zip} gives the same data numerically.
}
\label{fig:compress-decompress}
\end{figure*}

\paragraph{Results.}
\Cref{fig:compress-decompress} shows compression and decompression throughput versus peak memory usage; each point is annotated with the compression ratio.

\texttt{ssszip-bsc} reaches a compression ratio $1.3$--$94\times$ that of the best of the six general purpose compressors (\texttt{gzip} through \texttt{7z}) per text; the factor is largest on chr19, whose long-range repetitions no window, block or dictionary catches.
It attains this ratio at $2.8$--$27\times$ the compression throughput of \texttt{xz} and $3.3$--$30\times$ that of \texttt{7z} (the general purpose compressors with the highest ratios), and at a peak memory of $1.17$--$1.19n$ bytes.
From 1 to 32 threads, \texttt{xz} loses up to $36\%$ and \texttt{7z} up to $16\%$ of its ratio and both use more memory, because they parallelize by compressing independent chunks; the ratios of \texttt{ssszip-bsc}, \texttt{alz-bsc} and \texttt{bsc} instead stay unchanged, as does the memory of \texttt{ssszip-bsc} and \texttt{bsc}, whereas that of \texttt{alz-bsc} grows by $28\%$ on chr19 (due to the parallel merging of fingerprint hash sets).
Meanwhile, \texttt{ssszip-bsc}'s throughput grows by $4.5$--$6\times$.
\texttt{bsc}'s ratio is only $6\%$ below that of \texttt{ssszip-bsc} on dewiki, but $28\times$ lower on chr19, where a single block no longer captures the long-range repetitions.
\texttt{alz-bsc} achieves ratios similar to \texttt{ssszip-bsc}: $19\%$ and $5\%$ higher on sars2 and dewiki, and $2.5\%$ lower on chr19, where its phrases are bounded by the granularity of the prefix-free parsing.
It is, however, consistently more expensive: it needs $1.48$--$1.49n$ bytes ($1.25\times$ the memory of \texttt{ssszip-bsc}), compresses $1.2$--$1.5\times$ slower on one thread, and does not scale as well, so that on 32 threads \texttt{ssszip-bsc} is $1.9\times$ (sars2), $3.9\times$ (chr19) and $1.7\times$ (dewiki) faster.
Its memory grows as the sampling parameter $s$ decreases; \cite{alz} reports it to fall below the input size only around $s = 2^7$, so the peak we observe at the recommended $s = 2^6$ is expected.

At decompression, \texttt{ssszip-bsc} makes a single pass over the phrases and needs $0.06$--$0.12n$ bytes, whereas \texttt{bsc} must revert the BWT of an entire block and uses $15$--$50\times$ as much ($0.97$--$4.49n$ bytes); \texttt{bsc} is also $8\times$ (sars2) and $16\times$ (chr19) slower than \texttt{ssszip-bsc}, and faster only on dewiki.

\section{Conclusion}\label{sec:conclusion}
We gave the first practical implementation of Ellert's small-space LZ77 algorithms \cite{lz77_sublinear}, a $3$-approximation and an exact algorithm, by replacing their two components that resist a direct implementation -- a short-pattern lookup table that degenerates for realistic inputs and an orthogonal range reporting data structure -- and tuning every remaining stage.
This targets LZ-based compressed indexing, where the number of phrases $z$, rather than the encoded byte size, governs the index size, and where computing a factorization of few phrases in little space is a construction bottleneck.
On 32 threads, our 3-approximation factorizes $4.5$--$13.6\times$ faster than the classical linear-space algorithm at $14\times$ less memory, and even our exact algorithm is $1.4$--$2.9\times$ faster at $9\times$ less memory, while staying far below the guaranteed $3z$ phrases in practice.
As a side result, feeding only the perfect phrases to a general purpose compressor yields \ssszip, a precompressor on par with the state-of-the-art \texttt{alz} \cite{alz} in compression ratio while using less memory and scaling better in parallel.

\bibliographystyle{siamplain}
\bibliography{paper}

\appendix

\crefalias{section}{appendix}

\section{Omitted Proof}\label{apx:proof}
\begin{proof}[Proof of \Cref{thm:patt_lengths} (\Cref{sec:sampling})]
Set $\mu = \chi(3)$ and $\nu = \chi(\lambda_{\max})$.
The choice $\lambda_j = \LCP'_C[\lfloor \mu + \tfrac{j}{M}(\nu-\mu)\rfloor]$ for $j \in [1,M)$ and $\lambda_M = \lambda_{\max}$ implies, for every $j \in [1,M]$, we have
\begin{equation}\label{eq:Hsize}
      |H(\SIV^{\lambda_j}_C)| \leq \chi(\lambda_j) \leq \mu + \tfrac{j}{M}(\nu-\mu).
\end{equation}
Summing \eqref{eq:Hsize} and using $\sum_{j=1}^{M} j = M(M+1)/2$, we get
\begin{align*}
      \sum_{j=1}^{M} |H(\SIV^{\lambda_j}_C)| &\leq M\mu + \frac{\nu-\mu}{M}\cdot\frac{M(M+1)}{2} \\
      &= M\,\frac{\mu+\nu}{2} + \frac{\nu-\mu}{2}.
\end{align*}
This is at most $\theta$ if and only if $M \leq (2\theta+\mu-\nu)/(\mu+\nu)$, so $M = \lfloor (2\theta+\mu-\nu)/(\mu+\nu)\rfloor$ ensures (\textbf{i}).
Moreover $M \geq 1$, because $M \geq 1 \Leftrightarrow 2\theta+\mu-\nu \geq \mu+\nu \Leftrightarrow \theta \geq \nu$, which holds, as required by the theorem.

Both (\textbf{ii}) and (\textbf{iii}) hold in expectation under the assumptions that, for $j \in [1,M]$, it holds
\begin{align*}
      &\text{(vi)}\ \ |\SIV^{\lambda_j}_C| = \Theta\bigl(\chi(\lambda_j)\bigr) \text{ and} \\
      &\text{(v)}\ \ \chi(\lambda_j) = \mu + \Theta\!\Bigl(\tfrac{j}{M}(\nu-\mu)\Bigr),
\end{align*}
For (\textbf{iii}), assumption (\textbf{v}) gives $\chi(\lambda_j) - \chi(3) = \Theta(\tfrac{j}{M}(\nu-\mu))$, and using $|H(\SIV^{\lambda_j}_C)| = |\SIV^{\lambda_j}_C| = \Theta(\chi(\lambda_j))$ from (\textbf{vi}), we get
\[
      |H(\SIV^{\lambda_j}_C)| - \chi(3)
      = \Theta\!\Bigl(\tfrac{j}{M}(\nu-\mu)\Bigr)
      = \Theta(j),
\]
since $(\nu-\mu)/M$ is constant in $j$.
For (\textbf{ii}), summing this estimate over $j$,
\begin{align*}
      \sum_{j=1}^{M} |H(\SIV^{\lambda_j}_C)|
      &= \sum_{j=1}^{M} \Theta\bigl(\chi(\lambda_j)\bigr) \\
      &= \Theta\!\Bigl(M\mu + \tfrac{\nu-\mu}{M}\cdot\tfrac{M(M+1)}{2}\Bigr) \\
      &= \Theta\!\Bigl(M\,\tfrac{\mu+\nu}{2} + \tfrac{\nu-\mu}{2}\Bigr)
      = \Theta(\theta),
\end{align*}
where the last step uses our choice $M = \lfloor (2\theta+\mu-\nu)/(\mu+\nu)\rfloor$.
\end{proof}

\section{Parameter Choice and Tuning}\label{apx:parameters}
Our implementation has eight tunable parameters, whose values the main text states where they are introduced; this appendix explains how we chose them.

\subsection{Tuning Methodology}\label{apx:methodology}
The tuning corpus is the one of \cite{master_thesis} and is disjoint from the texts of \Cref{tab:texts} in size and, for four of eight texts, also in content: \texttt{einstein.en.txt}, \texttt{cere}, \texttt{english} and \texttt{boost} from the Pizza\&Chili corpus, plus $20$\,GiB prefixes of dewiki, sars2 and chr19 and a $16$\,GiB Common Crawl text.
It deliberately spans the full repetitiveness range, from $g/n = 0.0003$ (\texttt{boost}) to $g/n = 0.70$ (\texttt{english}), so that a parameter that only works on highly repetitive inputs is visibly penalized.
We tuned one parameter at a time, sweeping it over a range of powers of two while holding the others at their current value, and selected the value that was best (or tied for best) on the majority of the corpus with respect to throughput, peak memory and, where applicable, approximation ratio.
Because the tuning corpus is much smaller than the $50$\,GB texts of \Cref{sec:experiments}, all parameters are expressed relative to $n$, $z$, $|C|$ or $g$ rather than as absolute sizes, so that they transfer to inputs of a different scale.

\subsection{\texorpdfstring{SSS Parameter $\tau$}{SSS Parameter tau}}\label{apx:tau}
Recall that $S$ is a $\tau$-synchronizing set (\Cref{def:sss}) and that $|S| = \Theta(n/\tau)$.
Sweeping $\tau = 2^i$ for $i \in [4,11]$ leaves the approximation ratio essentially unchanged, because it does not affect which phrases are perfect.
It does trade memory against time: $\tau \in \{128,256\}$ can be faster on texts with long periodic regions (\texttt{cere}), but $\tau \geq 512$ both lowers the peak memory (the SSS shrinks as $\tau$ grows) and raises the throughput on all non-periodic texts, which is why we settled on $\tau = 512$ \cite[Section 5.1]{master_thesis}.

\subsection{\texorpdfstring{Rolling Hash Index Size $h$}{Rolling Hash Index Size h}}\label{apx:hash_index_size}
Recall from \Cref{sec:hash_index} that $h$ is the number of entries of the rolling hash index $H$: we give $H$ a memory budget of $\MH = \min(2^{30}, \max(2^{20}, n/10, g/3, \Mpeak - \Mcur))$ bytes and set $h$ to the power of two that results in the size of $H$ being closest to $\MH$.
On unrepetitive texts $g \approx n$, so $g/3$ gives $H$ about $n/3$ bytes.
On repetitive texts $g \approx 0$, where $g/3$ alone would make $H$ so small that collisions dominate, $n/10$ takes over; it is the smallest size for which the approximation ratio no longer improved.
The term $\Mpeak - \Mcur$ is the memory that the LPF phase has allocated and already released, and we let $H$ occupy it, because this does not increase the peak memory consumption.
Finally, we clamp $\MH$ to $[2^{20}, 2^{30}]$.

\subsection{\texorpdfstring{Pattern Lengths $L$ of the Rolling Hash Index}{Pattern Lengths L of the Rolling Hash Index}}\label{apx:patt_lens}
Recall that $L$ holds the pattern lengths whose rolling fingerprints we maintain while scanning the gaps, and that we look up each of them in $H$ at every gap position (\Cref{sec:hash_index}).
$L$ (\Cref{tab:lengths}) holds five lengths, which was the smallest number at which adding a sixth no longer improved the approximation ratio while each additional length costs one rolling fingerprint per text position.
The five values are chosen from the gap phrase length guess $\rho = \min(\bar{g}, \bar{l}, 2^{10-7g/n})$: the shortest length is always $2$ so that even very short gap phrases are found, the longest tracks $\rho$, and the remaining three interpolate roughly geometrically.
The third term $\rho$ keeps the schedule from drifting to long patterns on unrepetitive texts, where gap phrases are short and long patterns would simply never match; it decays from $2^{10}$ at $g/n = 0$ to $8$ at $g/n = 1$.
Since the rolling fingerprints are never stored, space is no concern here and we use the large mersenne prime $q = 2^{107}-1$ with 128-bit registers, which minimizes collisions and thereby maximizes the compression ratio.

\subsection{\texorpdfstring{Grid Cell Width $D$}{Grid Cell Width D}}\label{apx:grid}
Recall that the grid of \Cref{sec:grid} divides the point range $[1,N]^2$ into cells of side length $D$, and that $N = |C|$ here.
A \swOroR query costs $\Oh(|Q|/D^2 + N/D + D)$ time, so $D$ trades the cost of the fully contained cells against that of the boundary cells, while the cell array itself shrinks as $D$ grows.
Sweeping $D = 2^i$, we found $D = 2^{14}$ to give the best query throughput; at that point the array $I$ occupies a negligible fraction of the data structure, so larger $D$ buys no space worth the extra boundary work.
The character decomposition of \Cref{thm:decomp} shrinks the effective grid area by up to a factor of $\sigma$, which makes a small $D$ affordable.

\subsection{\texorpdfstring{Small-Range Threshold $\gamma$}{Small-Range Threshold gamma}}\label{apx:gamma}
Recall that $\gamma$ is the width below which we call a query rectangle narrow, i.e.\ $\min(x_2-x_1, y_2-y_1) \leq \gamma$ (\Cref{sec:oror}).
For such a rectangle, we answer the query with a linear scan over $\Pi$ and $\Psi$ instead of querying the grid, which costs one random memory access instead of several.
$\gamma = 4096$ is the crossover we measured between the two: the scan touches $\gamma$ consecutive entries, which is still cheaper than the grid's random accesses, and the queries of the exact algorithm are narrow in one dimension often enough for this to matter.

\subsection{\texorpdfstring{Fingerprint Sampling Rate $s$}{Fingerprint Sampling Rate s}}\label{apx:fp_rate}
Recall that we store the Karp-Rabin fingerprint of every $s$th prefix of $T$ and obtain the fingerprint of an arbitrary substring from two of them (\Cref{thm:fingerprints}).
\Cref{thm:fingerprints} gives an $\Oh(n/s + \sqrt n)$-size structure with $\Oh(s)$ query time, so $s$ is a direct time/space dial.
Sweeping $s = 2^i$, we found $s = 16$ together with $32$-bit fingerprints ($q = 2^{31}-1$) to be the best trade-off: the array then costs $n/4$ bytes, while the $\Oh(s)$ extension loop stays within a few cache lines.
We can use a $32$-bit prime here, because the fingerprints only serve as the hash function of the interval tables and every lookup verifies its result with an LCE query (\Cref{sec:sampling}).

\subsection{\texorpdfstring{Interval Sampling Budget $\theta$ and $\lambda_{\max}$}{Interval Sampling Budget theta and lambda max}}\label{apx:sampling}
Recall that $\Lambda \subseteq [3, \lambda_{\max}]$ holds the pattern lengths for which we precompute interval tables, so $\lambda_{\max}$ is the longest pattern we sample (\Cref{sec:sampling}).
$\theta$ bounds the total number of stored intervals, so it is the memory dial of \Cref{sec:sampling}; $\theta = 2|C|$ was the largest value in our sweep for which the added memory was still repaid by the reduced running time.
For the suffix intervals we set $\lambda_{\max}$ to the mean approximate phrase length $\lambda_{\max} = n/z''$, which is close to the expected length of the computed exact phrases.
We also cap $\lambda_{\max}$ at the largest $\LCP_C$ entry, because every interval of a greater length is a singleton.
For the sparse \emph{prefix} array intervals we keep $\theta = 2|C|$ but use $\lambda_{\max} = \delta = \min(\lfloor n/z'' \rfloor, 256)$, because there we only ever query patterns of length at most $\delta$ and allowing $\delta > 256$ neither improved throughput nor reduced peak memory.

\subsection{\texorpdfstring{Parallel Block Factor $K$}{Parallel Block Factor K}}\label{apx:parallel}
Recall that each block of the parallel gap factorization holds $B = \max(4096, n/(K \cdot p))$ gap positions (\Cref{sec:parallel}).
$K$ bounds the number of iterations, and overwriting $\Hread$ with $\Hwrite$ costs $\Oh(K \cdot n / p)$ time in total, so large $K$ costs time while small $K$ costs compression ratio (larger blocks result in more distant text positions in $\Hread$).
$K = 512$ was the smallest value in our sweep at which the ratio had converged.
For the same reason we fall back to the sequential gap factorization when $g/n \leq 0.2$: there, gap factorization is a small fraction of the running time, so parallelizing it would only cost compression ratio.

\section{Detailed Measurements}\label{apx:numbers}
\Cref{tab:numbers-lz,tab:numbers-zip} list the numbers behind \Cref{fig:lz,fig:compress-decompress}.

\begin{table*}[t]
\centering
\setlength{\tabcolsep}{3pt}
\begin{tabular}{l|rrrrr|rrrrr|rrrrr}
\hline
 & \multicolumn{5}{c|}{sars2} & \multicolumn{5}{c|}{chr19} & \multicolumn{5}{c}{dewiki} \\
 & \texttt{3aprx} & \texttt{exact} & \texttt{smpl} & \texttt{LPF} & \texttt{KKP2} & \texttt{3aprx} & \texttt{exact} & \texttt{smpl} & \texttt{LPF} & \texttt{KKP2} & \texttt{3aprx} & \texttt{exact} & \texttt{smpl} & \texttt{LPF} & \texttt{KKP2} \\
\hline
thr.\ $p{=}1$ & 31.0 & 0.7 & 1.2 & 2.5 & 2.0 & 33.4 & 0.8 & 2.2 & 2.0 & 2.1 & 43.6 & 1.4 & 4.5 & 2.3 & 2.4 \\
thr.\ $p{=}32$ & 196.4 & 15.6 & 31.8 & 14.4 & -- & 93.6 & 16.1 & 29.0 & 20.6 & -- & 135.8 & 32.2 & 58.8 & 20.2 & -- \\
mem.\ $p{=}1$ & 1.174 & 1.312 & 1.772 & 17.0 & 17.0 & 1.192 & 1.375 & 1.877 & 17.0 & 17.0 & 1.180 & 1.333 & 1.819 & 17.0 & 17.0 \\
mem.\ $p{=}32$ & 1.174 & 1.312 & 1.803 & 17.0 & -- & 1.192 & 1.377 & 1.879 & 17.0 & -- & 1.180 & 1.336 & 1.822 & 17.0 & -- \\
$z_\mathsf{alg}/z$ & 1.40 & 1.00 & 1.00 & 1.00 & 1.00 & 2.02 & 1.00 & 1.00 & 1.00 & 1.00 & 1.44 & 1.00 & 1.00 & 1.00 & 1.00 \\
\hline
\end{tabular}
\caption{Numbers underlying \Cref{fig:lz}: factorization throughput [MB/s], peak memory consumption [bytes/$n$, text included] and approximation ratio, on $p$ threads. Column headers abbreviate \texttt{sss-3aprx}, \texttt{sss-exact}, \texttt{sss-exact-smpl}, \texttt{LPF} and \texttt{KKP2}; \texttt{KKP2} is sequential.}
\label{tab:numbers-lz}
\end{table*}

\begin{table*}[t]
\centering
\setlength{\tabcolsep}{4pt}
\begin{tabular}{l|rrrrrrrrr}
\hline
 & \texttt{ssszip-bsc} & \texttt{alz-bsc} & \texttt{bsc} & \texttt{gzip} & \texttt{bzip2} & \texttt{lz4} & \texttt{zstd} & \texttt{xz} & \texttt{7z} \\
\hline
\multicolumn{10}{l}{\emph{sars2}} \\
\hspace{3pt}comp.\ thr.\ $p{=}1$ & 29.0 & 25.2 & 12.0 & 9.8 & 10.4 & 498.7 & 880.6 & 5.8 & 5.5 \\
\hspace{3pt}comp.\ thr.\ $p{=}32$ & 174.4 & 91.1 & 333.0 & -- & -- & -- & 2277 & 173.3 & 80.7 \\
\hspace{3pt}comp.\ mem.\ $p{=}1$ & 1.174 & 1.475 & 3.500 & $2.1{\cdot}10^{-5}$ & $1.6{\cdot}10^{-4}$ & $1.4{\cdot}10^{-4}$ & $7.4{\cdot}10^{-4}$ & $1.9{\cdot}10^{-3}$ & $3.9{\cdot}10^{-3}$ \\
\hspace{3pt}comp.\ mem.\ $p{=}32$ & 1.174 & 1.480 & 3.499 & -- & -- & -- & $6.4{\cdot}10^{-3}$ & 0.077 & 0.075 \\
\hspace{3pt}comp.\ ratio $p{=}1$ & 490 & 583 & 413 & 5.75 & 18.4 & 2.98 & 55.6 & 244 & 255 \\
\hspace{3pt}comp.\ ratio $p{=}32$ & 490 & 583 & 413 & -- & -- & -- & 55.6 & 208 & 235 \\
\hspace{3pt}decomp.\ thr. & 253.3 & 678.9 & 32.2 & 191.4 & 39.7 & 604.1 & 904.6 & 580.1 & 442.9 \\
\hspace{3pt}decomp.\ mem. & 0.062 & 0.051 & 3.113 & $2.1{\cdot}10^{-5}$ & $8.3{\cdot}10^{-5}$ & $1.4{\cdot}10^{-4}$ & $8.2{\cdot}10^{-5}$ & $2.1{\cdot}10^{-4}$ & $4.5{\cdot}10^{-4}$ \\
\hline
\multicolumn{10}{l}{\emph{chr19}} \\
\hspace{3pt}comp.\ thr.\ $p{=}1$ & 28.2 & 18.8 & 6.9 & 6.3 & 12.4 & 355.0 & 167.9 & 1.1 & 1.0 \\
\hspace{3pt}comp.\ thr.\ $p{=}32$ & 155.3 & 39.5 & 179.2 & -- & -- & -- & 1689 & 31.4 & 15.3 \\
\hspace{3pt}comp.\ mem.\ $p{=}1$ & 1.192 & 1.494 & 4.619 & $2.1{\cdot}10^{-5}$ & $1.6{\cdot}10^{-4}$ & $1.4{\cdot}10^{-4}$ & $8.4{\cdot}10^{-4}$ & $1.9{\cdot}10^{-3}$ & $3.9{\cdot}10^{-3}$ \\
\hspace{3pt}comp.\ mem.\ $p{=}32$ & 1.192 & 1.906 & 4.619 & -- & -- & -- & $7.7{\cdot}10^{-3}$ & 0.083 & 0.081 \\
\hspace{3pt}comp.\ ratio $p{=}1$ & 441 & 430 & 15.7 & 3.65 & 4.08 & 1.92 & 3.53 & 4.67 & 4.65 \\
\hspace{3pt}comp.\ ratio $p{=}32$ & 441 & 430 & 15.7 & -- & -- & -- & 3.53 & 4.64 & 4.64 \\
\hspace{3pt}decomp.\ thr. & 330.7 & 197.9 & 20.1 & 156.1 & 27.3 & 540.3 & 520.7 & 88.3 & 89.6 \\
\hspace{3pt}decomp.\ mem. & 0.122 & 0.183 & 4.494 & $2.1{\cdot}10^{-5}$ & $8.3{\cdot}10^{-5}$ & $1.4{\cdot}10^{-4}$ & $1.0{\cdot}10^{-4}$ & $2.1{\cdot}10^{-4}$ & $7.2{\cdot}10^{-4}$ \\
\hline
\multicolumn{10}{l}{\emph{dewiki}} \\
\hspace{3pt}comp.\ thr.\ $p{=}1$ & 29.8 & 23.7 & 430.1 & 23.5 & 11.0 & 400.3 & 1118 & 10.7 & 8.9 \\
\hspace{3pt}comp.\ thr.\ $p{=}32$ & 133.6 & 78.2 & 1147 & -- & -- & -- & 1794 & 320.3 & 158.8 \\
\hspace{3pt}comp.\ mem.\ $p{=}1$ & 1.180 & 1.486 & 0.984 & $2.1{\cdot}10^{-5}$ & $1.6{\cdot}10^{-4}$ & $1.4{\cdot}10^{-4}$ & $7.6{\cdot}10^{-4}$ & $1.9{\cdot}10^{-3}$ & $3.9{\cdot}10^{-3}$ \\
\hspace{3pt}comp.\ mem.\ $p{=}32$ & 1.180 & 1.491 & 0.984 & -- & -- & -- & $6.0{\cdot}10^{-3}$ & 0.077 & 0.075 \\
\hspace{3pt}comp.\ ratio $p{=}1$ & 490 & 515 & 461 & 3.32 & 14.9 & 2.48 & 106 & 391 & 364 \\
\hspace{3pt}comp.\ ratio $p{=}32$ & 490 & 515 & 461 & -- & -- & -- & 106 & 251 & 304 \\
\hspace{3pt}decomp.\ thr. & 354.9 & 502.2 & 476.1 & 152.5 & 41.6 & 616.7 & 918.6 & 621.0 & 455.8 \\
\hspace{3pt}decomp.\ mem. & 0.065 & 0.082 & 0.967 & $2.1{\cdot}10^{-5}$ & $8.3{\cdot}10^{-5}$ & $1.4{\cdot}10^{-4}$ & $8.2{\cdot}10^{-5}$ & $2.1{\cdot}10^{-4}$ & $4.5{\cdot}10^{-4}$ \\
\hline
\end{tabular}
\caption{Numbers underlying \Cref{fig:compress-decompress}: (de)compression throughput [MB/s], peak memory consumption [bytes/$n$] and compression ratio (uncompressed size divided by compressed size), for $p \in \{1,32\}$ threads. \texttt{gzip}, \texttt{bzip2} and \texttt{lz4} are sequential; decompression is single-threaded throughout.}
\label{tab:numbers-zip}
\end{table*}

\end{document}